\newcommand{\dataset}[1]{\texttt{#1}}
\newcommand{\algo}[1]{\textsc{#1}}
\newcommand{\Chen}{\algo{ChenEtAl}\xspace}
\newcommand{\Jones}{\algo{Jones}\xspace}
\newcommand{\Ours}{\algo{Ours}\xspace}
\newcommand{\OursObl}{\algo{OursOblivious}\xspace}
\DeclareMathOperator*{\argmin}{arg\,min}
\renewcommand{\epsilon}{\varepsilon}
\newcommand{\BO}[1]{O\left(#1\right)}
\newcommand{\Let}[2]{#1 $\leftarrow$ #2}
\newcommand{\M}{\ensuremath{\mathcal{M}}}
\newcommand{\X}{\ensuremath{\mathcal{X}}}
\newcommand{\rank}[1]{\operatorname{rank}{\left(#1\right)}}
\newcommand{\OPT}{\mbox{\rm OPT}}
\newcommand{\TTL}{\mbox{\rm TTL}}
\newcommand{\dmin}{d_{\rm min}}
\newcommand{\dmax}{d_{\rm max}}
\newcommand{\AVg}{\ensuremath{AV_\gamma}}
\newcommand{\RVg}{\ensuremath{RV_\gamma}}
\newcommand{\Ag}{\ensuremath{A_\gamma}}
\newcommand{\Rg}{\ensuremath{R_\gamma}}
\newcommand{\Og}{\ensuremath{O_\gamma}}
\newcommand{\repVg}{\ensuremath{repV_\gamma}}
\newcommand{\repsCg}{\ensuremath{repsC_\gamma}}
\newcommand{\AVgt}[1][]{\ensuremath{AV_{\gamma, #1}}}
\newcommand{\RVgt}[1][]{\ensuremath{RV_{\gamma, #1}}}
\newcommand{\Agt}[1][]{\ensuremath{A_{\gamma, #1}}}
\newcommand{\Rgt}[1][]{\ensuremath{R_{\gamma, #1}}}
\newcommand{\repsCgt}[1][]{\ensuremath{repsC_{\gamma, #1}}}
\newtheorem{theorem}{Theorem}
\newtheorem{corollary}{Corollary}
\newtheorem{fact}{Fact}
\newtheorem{lemma}{Lemma}
\begin{document}

\title{Fair Center Clustering in Sliding Windows}

\author*[1]{\fnm{Matteo} \sur{Ceccarello}}\email{matteo.ceccarello@unipd.it} 
\author[1]{\fnm{Andrea} \sur{Pietracaprina}}\email{andrea.pietracaprina@unipd.it} 
\author[1]{\fnm{Geppino} \sur{Pucci}}\email{geppino.pucci@unipd.it} 
\author[1]{\fnm{Francesco} \sur{Visonà}}\email{francesco.visona.3@studenti.unipd.it}

\affil[1]{\orgdiv{Department of Information Engineering}, \orgname{University of Padova}, \country{Italy}}

\abstract{
The $k$-center problem requires the selection of $k$ points (centers) from a given metric pointset $W$ so to minimize the maximum distance of any point of $W$ from the closest center. This paper focuses on a fair variant of the problem, known as \emph {fair center}, where each input point belongs to some category and each category may contribute a limited number of points to the center set. We present the first space-efficient streaming algorithm for fair center in general metrics, under the sliding window model. At any time $t$, the algorithm is
able to provide a solution for the current window whose quality is almost as good as the one
guaranteed by the best, polynomial-time sequential algorithms run on the entire window,
and exhibits space and time requirements independent of the 
window size. Our theoretical results are backed by an extensive set of experiments on both real-world and synthetic datasets, which provide evidence of the practical viability of the algorithm.
}

\keywords{
Fair clustering, streaming, sliding windows, k-center, doubling dimension
}

\maketitle

\section{Introduction}
Clustering  is a fundamental primitive for machine learning and data mining, with applications in many domains \citep{HennigMMR15}. One popular variant is $k$-center clustering, which has also been intensely studied in the realm of facility location \citep{Snyder11}. Given a set of points $W$ from a metric space and an integer $k \leq |W|$, the $k$-center problem requires to select $k$ points (dubbed \emph{centers}) from $W$,  which minimize the maximum distance of any point of $W$ from its closest center. The centers induce immediately a partition of  $W$ into $k$ \emph{clusters}, one per center,  where each point is assigned to the cluster associated to its closest center. When the distance between points models (dis)similarity,  each center can then be regarded as a suitable representative for all the points in the corresponding cluster. 
Indeed, efficient clustering approaches have been traditionally used for two main purposes, one being the partitioning of the dataset into groups  of similar data points for unsupervised classification, the other being the selection of a small number of representative points (the centers of the clusters) that are good descriptors of the entire dataset. In particular, 
$k$-center has been often  employed  as a summarization primitive to
extract succinct coresets from large datasets, where computationally expensive analyses can then be performed (e.g., see \citep{CeccarelloPP20,DBLP:journals/jbd/CeccarelloPPS23} and references therein). 

In many applications, the points being clustered represent features of \emph{individuals}, and the result of the clustering informs decisions that may have an impact on these individuals' livelihoods.
In order not to discriminate some groups of individuals with respect to some specific, sensible attribute (e.g., ethnicity, gender, or political views), when clustering is used for summarization purposes, it is of utmost importance that the centers provide a \emph{fair} representation of the population with respect to this attribute.
To this end, in \emph{fair center clustering}, we label each point with a \emph{color}, representing the attribute value for the group which the corresponding individual belongs to, and we impose an upper limit on the number of returned centers from each color, so that no color, hence no group, is over-represented in the output.
We remark that blindly ignoring sensible attributes does not imply a fair solution~\cite{DBLP:conf/innovations/DworkHPRZ12}: in fact, group membership information can leak from other features of the points, or the features themselves might be discriminative towards some group.

Scalable versions of this fair $k$-center variant
have recently been studied in distributed and insertion-only streaming settings \citep{DBLP:journals/jbd/CeccarelloPPS23}. In this work, we devise and experiment with
the first efficient streaming algorithm for fair center under the more challenging, heavily studied sliding-windows model \citep{DatarM16}, where the sought solution at any time must refer only to a fixed-size window $W$ of the most recent data, disregarding older elements of the stream, so to account for concept drift in the data. 

\subsection{Related work}

The $k$-center problem, being a fundamental primitive in many data analysis workflows, has been studied extensively in the past decades. For brevity, in this section we give an account of those results that relate most closely to our work. 

In the sliding window model,
\citep{DBLP:conf/icalp/Cohen-AddadSS16,PellizzoniPP22}
provide approximation algorithms for the unrestricted $k$-center problem without fairness
constraints. The variant of $k$-center with outliers in the sliding window model has been studied in
\citep{PellizzoniPP22a}.
A relevant generalization of the $k$-center problem is the \emph{matroid center} problem,
where the centers of the clustering are required to be an independent set of
a given matroid.
The seminal work of
\citep{ChenLLW16} 
provided the first 3-approximation sequential algorithm for this problem, albeit featuring a high computational complexity. 
The matroid center problem has also been studied in
\citep{DBLP:conf/approx/Kale19, DBLP:journals/jbd/CeccarelloPPS23} in the insertion-only streaming  setting, yielding a $(3+\epsilon)$ approximation and also allowing for the presence of outliers.

The fair center problem studied in this paper can be seen as a specialization of the matroid center problem for the so-called \emph{partition matroid}, built on points belonging to   different categories and whose  independent sets are those containing no more than  a fixed upper limit of  points per category. In 
\citep{DBLP:conf/icml/KleindessnerAM19}  the authors provide a sequential $(3\cdot 2^{\ell-1} -1)$-approximation   algorithm for fair center (where $\ell$ is the number of categories),
with a runtime linear both in $k$ and in the number of points.
Later,
\citep{DBLP:conf/icml/JonesNN20} 
improved the approximation factor to 3 while retaining the same time complexity.  \citep{DBLP:conf/aistats/Amagata24} studies the fair clustering problem in the presence of outliers, providing a sequential, randomized bicriteria approximation  algorithm.
A further specialization of the problem, where each center is required to represent at least a given fraction of the input points, in addition to respecting the fairness constraints, has been studied by
\citep{DBLP:conf/icml/AngelidakisKSZ22}.
In the insertion-only streaming setting, the fair center problem has
been considered in 
\citep{DBLP:conf/icml/ChiplunkarKR20,DBLP:journals/corr/GanGYZ23,DBLP:conf/pakdd/LinGJ24}. These papers all yield 3 approximations for the problem, with improvements in the working space requirements over  the general matroid center algorithms discussed before.  To the best of our knowledge, ours is the first work to address the fair center problem in the more challenging sliding window setting.

We wish to stress that, although widely studied, the notion of fairness in center selection adopted in this paper is not the only possible one.
For instance, 
\citep{JungKL19, DBLP:conf/icml/MahabadiV20} 
consider \emph{individually fair} clustering, where each point is required
to have at least one center among its $n/k$ nearest neighbors.
\citep{DBLP:journals/jgo/HanXXY23} also consider this problem in the presence of outliers.
Also,  observe that all the approaches discussed so far refer to enforcing fairness conditions on the centers, and are thus tightly associated to the notion of clustering as a summarization primitive.  There is a very prolific line of research  which regards fairness as a set of balancing constraints on the elements of the partitioning induced by the centers (e.g., the clusters).  For an account of recent developments in this direction (not explored in this paper), we refer the
interested reader to \citep{DBLP:conf/www/CeccarelloPP24} and references
therein.

\subsection{Our contribution}
We present the first sliding window algorithm for fair center clustering in general metric spaces, which, at any time $t$, is 
able to provide an accurate solution for the current window,  and requires space and time independent of the 
window size.  
More specifically, let $S$ denote a potentially infinite stream of points from a metric space, and let $n>0$ be the target window size. Each point of $S$ is associated with one of $\ell$ colors and fairness is modeled by requiring that any feasible solution to the problem contains
at most $k_i$ points of color $i$, for every $1 \leq i \leq \ell$.
Let $\epsilon \in (0,1)$ be a fixed accuracy parameter, $\Delta$ be the aspect ratio of $S$ (i.e., the ratio between maximum and minimum pairwise distance), and $D_{W}$ be the doubling dimension of a window $W \subset S$ (formally defined in Section~\ref{sec:prelim}), which 
generalizes the notion of
Euclidean dimensionality to general metrics.
Also,  define $k = \sum_{i=1}^{\ell} k_i$ and 
let $\alpha$ denote the best approximation ratio guaranteed by a polynomial-time sequential algorithm for fair center (currently $\alpha=3$ \citep{DBLP:conf/icml/JonesNN20}). Our main contributions are the following (detailed statements are found in Section~\ref{sec-fairkcenter}).

\begin{itemize}
\item 
An algorithm that at any time $t$  returns  an $(\alpha+\epsilon)$-approximate solution 
to fair center for the current window $W_t$, requiring
working space $m=\BO{k^2 \log \Delta (c/\epsilon)^{D_{W_t}}}$, for a given constant $c$. The algorithm's update (resp., query) time to handle a new arrival (resp., to compute a solution) is $\BO{m}$ (resp., $\BO{km}$), thus \emph{both independent of the window size $|W_t|$}.
\item 
\sloppy
A modification of the above algorithm that
at any time $t$ is able to return  a $\BO{1}$-approximate solution 
to fair center for $W_t$, requiring
working space $m=\BO{k^2 \log \Delta/\epsilon}$, update time $\BO{m}$ and query time $\BO{km}$. Thus, in this algorithm \emph{the 
exponential dependency on the doubling dimension $D_W$ has been 
removed}, at the expense of a weaker, but still constant, approximation ratio. 
\item 
Extensive experimental evidence on real datasets that the above algorithms indeed provide solutions of quality comparable to the best sequential algorithms run on the entire window, \emph{using only a fraction of the space and being orders of magnitude faster}.
\end{itemize}

To the best of our knowledge, ours are the first accurate, space and time efficient sliding window algorithms
for fair center clustering. 
Our algorithms build upon the coreset-based strategy used in \citep{DBLP:conf/icalp/Cohen-AddadSS16,PellizzoniPP22} for the unconstrained $k$-center problem, but 
introduce non trivial, crucial modifications in the coreset construction
to ensure that accurate fair solutions can be extracted from the coreset.  

We remark that although the performance of our most accurate algorithm is expressed in terms of the doubling dimension $D_{W_t}$, the algorithm does not require the knowledge of this parameter to run, which is very desirable in practice, since the doubling dimension is hard to estimate. 
Similarly, the aspect ratio $\Delta$ of the current window needs not be
provided as an explicit input to our algorithms, but it can inferred by maintaining good estimates of the minimum and maximum pairwise distances of the points in $W_t$ (hence, of $\Delta$) without worsening its theoretical and practical performance. 

\paragraph{Organization of the paper} The rest of the paper is structured as follows. Section~\ref{sec:prelim} defines the problem, the computational model and a number of basic notions. Section~\ref{sec-fairkcenter} presents the algorithm (Subsection~\ref{sec-algorithm}) and its analysis (Subsection~\ref{sec-analysis}). Section~\ref{sec-experiments} reports on the experimental results. Section~\ref{sec-conclusions} closes the paper with some concluding remarks.

\section{Preliminaries} \label{sec:prelim}

\paragraph{Problem definition}
Consider a metric space $\X$ equipped with a distance function $d(\cdot,\cdot)$. For $x\in \X$ and $W\subseteq \X$, let $d(x,W) = 
\min\{d(x,w): w\in W\}$ denote the minimum distance of $x$ from a point in $W$. For a given $W\subseteq \X$ and a natural $k$, the classic \emph{$k$-center} clustering problem \citep{Gonzalez85} requires to find a subset $C \subseteq W$ of size at most $k$ minimizing the \emph{radius}
$r_C(W) = \max_{p \in W} d(p, C).$
Observe that any set of centers $C$ defines a natural partition of the points of $W$ into \emph{clusters}, by assigning each point of $W$ to its closest center in $C$ (with ties broken arbitrarily). 

In this paper, we study the following variant of the $k$-center problem, dubbed \emph{fair center}. Assume that each point in $\X$ is associated with one of a finite set of $\ell$ categories (dubbed \emph{colors} in the following). For a given $W\subseteq \X$ and positive integers $k_1, k_2, \ldots, k_{\ell}$, a solution to the fair center problem is  a set  $C\subseteq W$ of centers  minimizing $r_C(W)$, under the additional constraint that $C$ contains at most $k_i$ centers of the $i$-th color, for $1\leq i\leq \ell$. We denote with $\OPT_{W}$ the radius of an optimal fair center solution for $W$.

The above fairness constraint can be envisaged as a special case of the more general class of \emph{matroid constraints}, that have been studied in the context of clustering starting from \citep{ChenLLW16}. 
Given a  ground set $\X$, recall that a matroid $\M$ on $\X$ is a pair $\M = (\X, I)$, where $I \subseteq 2^{\X}$ is a family of \emph{independent sets}  featuring the following two properties:  (a) \emph{downward closure} 
(if $P\in I$ and $P'\subseteq P$ then $P'\in I$); and (b) \emph{augmentation property} (if $P, Q \in I$ and  $|P| > |Q|$ then $\exists x \in P \setminus Q$ such that $Q \cup 
\{x\} \in I$). 
An independent set is \emph{maximal} if it is not a proper subset of any other independent set. Observe that as a consequence of the augmentation property, every maximal independent set in a matroid $\M$ has the same cardinality,
which is denoted with $\rank{M}$. Furthermore, any subset
$W \subseteq \X$ induces a (sub)matroid $(W,I')$, with $I' = \{Y \cap W:  Y \in I\}$. We say that a subset $Y\subseteq W$ is a maximal independent set w.r.t. $W$, if $Y\in I'$ is a maximal independent set of this submatroid. 

Given a matroid $\M = (\X, I)$ and a set $W\subseteq \X$, the \emph{matroid center} problem seeks an \emph{independent set}  $C \in I$ minimizing $r_C(W)$. 
For fixed $k_1, k_2, \ldots, k_\ell$ and $W\subseteq \X$, the constraint to be imposed on the solution of fair center can be seen as a matroid constraint with respect to the so-called \emph{partition matroid} of rank $k=\sum_{i=1}^{\ell}k_i$, where the family of independent sets contains all subsets of $\X$ with at most $k_i$ points of each color $i$, for $1\leq i\leq \ell$. This implies that any algorithm for the  matroid center problem can be immediately specialized to solve the fair center problem. 

\paragraph{Doubling metric spaces.}
Given $W \subseteq \mathcal{X}$,
a point $x \in W$ and a real value $r > 0$, the \emph{ball} of radius $r$
centered in $x$, denoted by $B(x, r) \subseteq W$, is the set $\{p \in W : d(x, p) \le r\}$.
The \emph{doubling dimension} of $W$ is the minimum value $D$ such that,
for all $x \in W$,  $B(x, r)$ is contained in the union of
at most $2^D$ balls of radius $r/2$.
The concept of doubling dimension generalizes the notion of dimensionality of Euclidean spaces and has been used in a wide variety of applications (see~\citep{GottliebKK14,PellizzoniPP22} and references therein).

\paragraph{Sliding windows model}
A \emph{stream} $S$ is a potentially infinite ordered sequence of points from some (metric) space $\mathcal{X}$. 
At each (discrete) time step $t=1,2, \ldots$, a new point $p$ arrives, and we denote with $t(p)$ its arrival time.
Given an integer $n$ and a time $t$, the \emph{window} $W_t \subseteq \X$ at time $t$ of size $n$  is the (multi-)set of the last $n$ points of the stream $S$.
Solving a problem on the sliding windows model entails maintaining data structures that can be queried at any time $t>0$ to return the solution of the problem for the instance $W_t$.  In the sliding window  model, the key performance metrics are
(a) the amount of \emph{space} used to store the data structures;
(b) the \emph{update time} required to handle the arrival of a point $p$ at time $t$; and 
(c) the \emph{query time}, that is the time to extract the solution for window $W_t$ from the data structures.

\section{Fair-center for sliding windows} \label{sec-fairkcenter}
Consider a stream $S$ of colored points from a metric space $\X$ with distance function $d(\cdot,\cdot)$, and let $n>0$ be the target window size. 
We let $\M = (\X,I)$ be a partition matroid defined on $\X$,
whose independent sets are subsets containing at most $\leq k_i$
points of each color $i$, for all $1 \leq i \leq \ell$, and whose rank is $k = \sum_{i=1}^{\ell} k_i$. In this section, we present our algorithm that, at any time $t>0$, is able to return an accurate solution to fair center for the current window $W_t$ of size $n$. The algorithm is described in Subsection~\ref{sec-algorithm} and its accuracy, as well as its time and space requirements, are analyzed in 
Subsection~\ref{sec-analysis}. 

\subsection{Algorithm} \label{sec-algorithm}
The algorithm builds upon the one presented in \citep{PellizzoniPP22} for the unconstrained $k$-center problem in sliding windows, but it introduces crucial modifications which allow to handle the fairness constraint. Also,  a significant simplification of the employed data structures affords more elegant and intuitive correctness and performance analyses with respect to all previous approaches for unconstrained $k$-center \citep{DBLP:conf/icalp/Cohen-AddadSS16,PellizzoniPP22,PellizzoniPP22a}. In this section we provide a full description of the algorithm and its analysis,  pointing out the specific differences with the aforementioned precursor.

At any time step $t$, for a point $p \in S$ with $t(p) \leq t$, its \emph{Time-To-Live} (TTL), denoted as $\TTL(p)$, is the number of remaining steps $\geq t$, in which $p$ will be part of the current window, namely $\TTL(p)= \max\{0,n-(t-t(p))\}$, where $n$ is the size of the window. We say that $p$ is \emph{active} at any time when $\TTL(p) >0$. 
$\TTL(p)$ decreases at every time step, and we say that $p$ \emph{expires} at the time when $\TTL(p)$ becomes 0 (i.e., at time $t(p)+n$).
Let $\dmin$ and $\dmax$ be, respectively, the minimum and maximum pairwise distance between points of $S$, and define $\Delta = \dmax/\dmin$, which we refer to as the \emph{aspect ratio} of $S$. 
For a fixed parameter $\beta>0$, we define the following set of guesses for the optimal radius (which, being a distance between two points,  clearly falls in $[\dmin,\dmax]$):
\[
\Gamma = \left\{
(1+\beta)^i : 
\lfloor \log_{1+\beta} \dmin \rfloor
\leq i \leq
\lceil \log_{1+\beta} \dmax \rceil
\right\}\protect\footnote{For ease of presentation, we assume that $\dmin$ and $\dmax$ are known to the algorithm. However, we remark that the same techniques introduced in \citep{PellizzoniPP22} can be employed
to provide estimates of these quantities and to make $\Gamma$ adaptive to the aspect ratio of the current window, rather than of the entire stream.}.
\]

In broad terms, the algorithm maintains, for each guess $\gamma \in \Gamma$, suitable sets of active points whose overall size is independent of $n$. At any time $t$, from these points it will be possible to identify a guess providing a tight lower bound on the optimal radius, and, based on this guess, to extract a small coreset which embodies a provably good approximate solution to fair center for $W_t$. A query will then compute  such a solution by running the best sequential algorithm available for fair center on the coreset. 
For each $\gamma \in \Gamma$, 
the algorithm maintains two families of active points: 
\emph{validation points} and
\emph{coreset points}.
In turn, each of these two families consists of two (not necessarily disjoint) sets: namely,
$\AVg$ ($v$-\emph{attractors}), and
$\RVg$ ($v$-\emph{representatives}), for validation points; and 
$\Ag$ ($c$-\emph{attractors}), and
$\Rg$ ($c$-\emph{representatives}), for coreset points. These sets are updated after the arrival of each new point of the stream. 
Set $\AVg$ contains at most $k+1$ points at pairwise distance
$\geq 2 \gamma$.
Each $v$-attractor $v\in\AVg$ is paired with a single $v$-representative
denoted as $\repVg(v)$, 
which is %
a recent point of the stream at distance at most $2\gamma$ from $v$, and it is included in $\RVg$.
We remark that there may be $v$-representatives that are not paired with any $v$-attractor. Indeed, when a $v$-attractor $v\in\AVg$ 
expires, its $v$-representative $\repVg(v)$ is not further updated, but it remains in
$\RVg$ until it expires or it is expunged by a suitable clean-up procedure. 

A key difference with the algorithm in \citep{PellizzoniPP22} is in the choice of coreset points, which must now be empowered to account for the fairness constraint. 
Let $\delta \in (0,1)$ be a fixed, user-defined precision parameter that is closely related to the resulting approximation quality of the algorithm (see Theorem \ref{thm:approx}).
The set $\Ag$ contains points at pairwise distance
$\geq \delta \gamma/2$, which is a factor $\delta/4$ less than the least distance between $v$-attractors in $\AVg$. 
There is no fixed upper bound
on its size, which, however, will be conveniently bounded by the analysis. Upon arrival, a new point $p$ is either \emph{attracted} by a conveniently chosen point in $\Ag$ at distance $\leq \delta \gamma/2$, if any, or is added to $\Ag$ (and attracts itself). For each $a \in \Ag$, 
the algorithm maintains a \emph{set} of $c$-representatives $\repsCg(a)$,
which is a maximal independent set of active points attracted by $a$ with the longest remaining lifespan, and is included in $\Rg$.
For $i\in [1, \ell]$, we let
$\repsCg^i(v) \subseteq \repsCg(v)$ denote the 
(at most $k_i$) points of $\repsCg(v)$ of  color $i$.
After a $c$-attractor $a\in\Ag$ 
expires, its set of $c$-representatives $\repsCg(a)$ 
is not further updated, but each of its elements remains in
$\Rg$ until it expires or it is expunged by the aforementioned clean-up procedure.  

Clearly, all of the above sets of validation and coreset points evolve with time and, whenever we need to refer to one such set at a specific time $t$, we will add $t$ as a second subscript. We remark that in \citep{PellizzoniPP22}, representatives of expired $v$- and $c$-attractors were represented as different sets of point called "orphans". Our novel approach avoids this distinction, which yields a substantial simplification of the algorithm and of its analysis. 

An informal intuition behind the use of validation and coreset points is the following. Recall that $k=\sum_{i=1}^\ell k_i$.
We say that a guess $\gamma \in \Gamma$ is \emph{valid} at
time $t$, if $|\AVgt[t]| \leq k$. It is easy to see that if $\gamma \in \Gamma$ is not valid, (i.e., $|\AVgt[t]| \geq k+1$), then $\gamma$ is a lower bound to the optimal radius for unconstrained $k$-center on $W_t$, hence to the optimal radius for fair center on $W_t$, which can only be as large. 
The main purpose of validation points is to keep track of current valid guesses. Instead, coreset points provide, for each valid guess, a reasonably sized subset of the window points, so that any other window point is “well represented”  by a nearby coreset point of the same color, which ensures that an accurate fair solution on the coreset points is also an accurate solution for the entire window. 
Thus, running a sequential algorithm solely on the coreset points of a valid guess will yield a close approximation of the solution obtainable by running the algorithm on the entire window, while significantly reducing time and memory usage.
For each guess, the number of coreset points will depend on  $k$,  on the precision parameter $\delta$, and, most importantly,  on the doubling dimension $D_{W_t}$ of the current window. However, we remark that while $k$ and $\delta$ are 
parameters that must be given in input, 
the algorithm does not require the knowledge of the doubling dimension $D_{W_t}$, which will be used only in the analysis.

In what follows, we describe in more details the operations performed at time step $t$ to handle the arrival of a new point $p$ and, if required, to compute a solution to fair center for the current window $W_t$. 

\paragraph{Update procedure} 
\begin{algorithm}[h]
\caption{\textsc{Update}($p$)}
\label{alg:update}
\SetAlgoVlined
\LinesNumbered

\Let{$i$}{color of $p$}

\Let{$x$}{point expired when $p$ arrives}

\ForEach{$\gamma \in \Gamma$}{
    Remove $x$ from any set $AV_\gamma,A_\gamma,RV_\gamma,R_\gamma$ containing it\;

    \tcc{Identify which attractors $p$ can be a representative for}
    \Let{$EV$}{$\{ v \in AV_\gamma : d(p, v) \le 2 \gamma \}$}\;
    \Let{$E$}{$\{ v \in A_\gamma : d(p, v) \le \delta \gamma / 2 \}$}\;
       \tcc{Assign $p$ to a $v$-attractor $\psi_\gamma(p)$}

    \If{$EV = \emptyset$}{
        \Let{$AV_\gamma$}{$AV_\gamma \cup \{p\}$} \;
    \Let{$\psi_\gamma(p)$}{$p$};
    \Let{$\repVg(\psi_\gamma(p))$}{$p$}\;
    \Let{$RV_\gamma$}{$RV_\gamma \cup \{\repVg(\psi_\gamma(p))\}$}\;
        \textsc{Cleanup}($p, \gamma$)\;
    }
    \Else {
          \Let{$\psi_\gamma(p)$}{arbitrary element of $EV$}\;
         \Let{$repV_\gamma(\psi_\gamma(p))$}{$p$}
    }
     \tcc*[h]{Assign $p$ to a $c$-attractor $\phi_{\gamma}(p)$}
     
    \If{$E = \emptyset$}{\label{ln:coreset-update}
        \Let{$A_\gamma$}{$A_\gamma \cup \{p\}$}\; 
    \Let{$\phi_\gamma(p)$}{$p$};
    \Let{$\repsCg(\phi_\gamma(p))$}{$\{p\}$}\; 
    \Let{$\Rg$}{$\Rg \cup \{\repsCg(\phi_\gamma(p))\}$}\;
    } \Else {
       
        \Let{$\phi_{\gamma}(p)$}{$\argmin_{a \in E} |\repsCg^i(a)|$}\; \label{ln:c-attr}

        \tcc*[h]{Update $\repsCg^i(\phi_{\gamma}(p))$}

        \Let{$\repsCg^i(\phi_{\gamma}(p))$}{$\repsCg^i(\phi_{\gamma}(p)) \cup \{p\}$}\;
        \label{ln:coreset-update}
        \If{$|\repsCg^i(\phi_{\gamma}(p))| > k_i$}{
            \Let{$o_{rem}$}{$\argmin_{o \in \repsCg^i(\phi_{\gamma}(p))} TTL(o)$}\; 
            Remove $o_{rem}$ from $\repsCg(\phi_{\gamma}(p))$ in $\Rg$\;
            \label{ln:cat-overflow-end}
        }
    }
}
\end{algorithm}

\begin{algorithm}[h]
\caption{\textsc{Cleanup}($p, \gamma$)}
\label{alg:cleanup}
\SetAlgoVlined
\LinesNumbered
\If(\tcc*[h]{Remove the oldest $v$-attractor from $AV_\gamma$}){$|AV_\gamma| = k + 2$} {
    \Let{$v_{old}$}{$argmin_{v \in AV_\gamma} TTL(v)$}\;
    \Let{$AV_\gamma$}{$AV_\gamma \setminus \{v_{old}\}$}\;
}
\If(\tcc*[h]{remove unneeded points}){$|\AVg| = k+1$}{
    \Let{$t_{min}$}{$\min_{v \in \AVg} TTL(v)$}\;
    Remove all $q$ with $TTL(q) < t_{min}$ from $\Ag$, $\RVg$, and $\Rg$  
    \label{ln:expunge} \;
}

\end{algorithm}
The arrival of a new point $p$ is handled by procedure \textsc{Update}($p$) (Algorithm~\ref{alg:update}). For every guess $\gamma \in \Gamma$, the data structures are updated as follows. If $p$ is at distance
greater than $2\gamma$ from any other $v$-attractor, then $p$ is added to  $\AVg$, and to $\RVg$ as representative of itself,
otherwise it becomes the new representative of %
some (arbitrary) $v$-attractor %
$v$ with $d(v,p) \leq 2 \gamma$. In the former case, a procedure \textsc{Cleanup}($p,\gamma$) (Algorithm~\ref{alg:cleanup}) is invoked to reduce the size of the various sets as follows. If $|\AVg| = k+2$ the $v$-attractor with minimum TTL is removed from $\AVg$. 
After this, if  $|\AVg| = k+1$ all  
$c$-attractors, $v$-representatives, and $c$-representatives with TTL less than the minimum TTL $t_{min}$ of a
$v$-attractor are removed from $\Ag$.  These removals are justified by the fact that if $|\AVg| = k+1$, then
$\AVg$ acts as a certificate that $\gamma$ is not a valid guess until time $t+t_{min}$, hence, for that guess,  there is no need to keep points that expire earlier than that. 
For every point $p$, we denote with $\psi_{\gamma}(p)$ the $v$-attractor $v$ such that $p = \repVg(v)$ after the execution of \textsc{Update}$(p)$. in fact, $\psi_{\gamma}(p)$ is not explicitly stored in the data structures, but is solely needed for the analysis.
Coreset points are updated as follows. 
If $p$ is at distance
greater than $\delta \gamma/2$ from any other $c$-attractor, then $p$ is added to $\Ag$.
Otherwise, suppose that $p$ is of color $i$. Then $p$ is attracted by $a$ and added to the set $\repsCg(a)$, where
$a$ is chosen as the $c$-attractor at distance
at most $\delta \gamma/2$ from $p$ with minimum
$|\repsCg^i(a)|$. In case $|\repsCg^i(a)| > k_i$  (i.e., $|\repsCg^i(a)| = k_i+1$), 
the representative in $|\repsCg^i(a)|$ with minimum TTL is removed from the set. Observe that, unlike $AV_\gamma$, the size of $\Ag$ can grow  larger than $k+2$. The maximum size of these sets, however, will be conveniently upper bounded by the analysis. 
For every point $p$, we denote with $\phi_{\gamma}(p)$ the $c$-attractor such that $p \in \repsCg(\phi_{\gamma}(p))$, after the execution of \textsc{Update}$(p)$. Again,  
 $\phi_{\gamma}(p)$ is not explicitly stored in the data structures, but is solely needed for the analysis.

\paragraph{Query procedure}

\begin{algorithm}
\caption{\textsc{Query}()}
\label{alg:query}
\SetAlgoVlined
\LinesNumbered

\For{increasing values of $\gamma \in \Gamma$ such that $|\AVg| \le k$}{
    \Let{$C$}{$\emptyset$}\;
    \ForEach{$q\in \RVg$}{
        \lIf{$C = \emptyset \vee d(q, C) > 2\gamma$}{
            \Let{$C$}{$C \cup \{q\}$}
        }
        \lIf{$|C| > k$}{Break and move to the next guess
        }
    }
    \If{$|C| \le k$}{
    \label{ln:solution}
        \Return $\mathcal{A}(\Rg)$
        \tcc*[]{$\mathcal{A} = $ sequential fair center algorithm}
    }
}
\end{algorithm}

At any time $t$, to obtain a  solution to the fair center problem for $W_t$,
procedure \textsc{Query}() is invoked (Algorithm~\ref{alg:query}).
First, a valid guess $\gamma$ is identified 
such that a $k$-center clustering of radius $\leq 2 \gamma$ for the points in 
$\RVg$ is found, and  for any smaller guess $\gamma' \in \Gamma$ with $\gamma' < \gamma$, there are $k+1$ points in either $AV_{\gamma'}$ or $RV_{\gamma'}$ with pairwise distance $> 2 \gamma'$.
As it will be shown in the analysis, this ensures that the coreset points in $\Rg$ contain a good solution to fair center
for $W_t$, which is then computed by invoking the best sequential fair center algorithm  $\mathcal{A}$.

\subsection{Analysis} \label{sec-analysis}
The following lemma shows that at any time $t$,
the points  $\RVg$ (resp., $\Rg$)
are within distance $4 \gamma$
(resp., $\delta \gamma$) from all points of the entire $W_t$, when
$\gamma$ is valid, or of a suitable suffix
of $W_t$, otherwise. (The lemma is similar to \citep[Lemma 1]{PellizzoniPP22} but the proof given below is completely new and applies to the simpler version of the update procedure devised in this paper.)  

\begin{lemma} \label{lem:invariants}
    For every $\gamma \in \Gamma$ and $t>0$, the following
    properties hold after the execution of the \textsc{Update} procedure for the point arrived at time $t$:
 \begin{enumerate}
  \item\label{lemprop1} 
  If $|\AVgt[t]| \le k$, then $\forall q\in W_t$ we have:
    \begin{itemize}
        \item[]{\rm \bf (a)}
             $d(q, \RVgt[t]) \le 4\gamma$;
        \item[]{\rm \bf (b)}
        \label{case1b}
              $d(q, \Rgt[t]) \le \delta\gamma$;
    \end{itemize}
 \item\label{lemprop2} If $|\AVgt[t]| > k$, then $\forall q\in W_t$ such that $t(q) \ge \min_{v\in \AVgt[t]} t(v)$ we have:
    \begin{itemize}
        \item[]{\rm \bf (a)}
            $d(q, \RVgt[t]) \le 4\gamma$
        \item[]{\rm \bf (b)}
            $d(q, \Rgt[t]) \le \delta\gamma$;
    \end{itemize}
    \end{enumerate}
\end{lemma}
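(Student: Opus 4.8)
The plan is to prove parts (a) and (b) by exhibiting, for each relevant $q\in W_t$, a concrete representative that both lies within the claimed distance and survives in $\RVgt[t]$ (resp.\ $\Rgt[t]$). The starting point is the behaviour of \textsc{Update}: when $q$ arrived at time $t(q)\le t$ it was assigned to a $v$-attractor $\psi_\gamma(q)$ with $d(q,\psi_\gamma(q))\le 2\gamma$ and was set as $\repVg(\psi_\gamma(q))$, and symmetrically it was attracted by a $c$-attractor $\phi_\gamma(q)$ with $d(q,\phi_\gamma(q))\le \delta\gamma/2$ and inserted into $\repsCg(\phi_\gamma(q))$. From this I would extract two elementary facts. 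First, every value ever assumed by $\repVg(\psi_\gamma(q))$ from time $t(q)$ on is a point within $2\gamma$ of $\psi_\gamma(q)$, hence within $4\gamma$ of $q$; likewise every element of $\repsCg(\phi_\gamma(q))$ is within $\delta\gamma/2$ of $\phi_\gamma(q)$, hence within $\delta\gamma$ of $q$. Second, since representatives are overwritten only by strictly more recent arrivals and coreset overflow discards the oldest (minimum $\TTL$) element first, any representative associated with $q$'s attractors has arrival time $\ge t(q)$, and therefore cannot expire while $q$ is active.

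These facts reduce the whole statement to showing that such a representative of $q$ is never deleted by \textsc{Cleanup}, which is the heart of the argument. I would first describe the deletions precisely: writing $t_0(s)=\min_{v\in\AVgt[s]}t(v)$ for the arrival time of the oldest current $v$-attractor, the expunging step (line~\ref{ln:expunge}) fires only when $|\AVg|=k+1$ and removes from $\Ag,\RVg,\Rg$ exactly the points with arrival time $<t_0(s)$. Two monotonicity properties then do the work. (M1) $t_0(s)$ is non-decreasing in $s$, because $v$-attractors leave $\AVg$ only from the old end---both expiration and the first \textsc{Cleanup} block remove the oldest (minimum $\TTL$) attractor---while newly inserted attractors are strictly more recent. (M2) If $|\AVgt[t']|=k+1$, then $|\AVg|=k+1$ throughout $[t',\,t_0(t')+n)$: the oldest attractor present at $t'$ expires exactly at $t_0(t')+n$, by (M1) no attractor present during the interval expires earlier, and $|\AVg|$ (which never exceeds $k+1$ after an update) can drop below $k+1$ only via such an expiration.

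With these the case analysis is immediate. Suppose a representative $r$ of $q$, with arrival time $a_r\ge t(q)$, were expunged at some $t'\le t$; then $|\AVgt[t']|=k+1$ and $a_r<t_0(t')$. In Case~\ref{lemprop2} we may assume $t(q)\ge t_0(t)$, so by (M1) $a_r\ge t(q)\ge t_0(t)\ge t_0(t')$, contradicting $a_r<t_0(t')$. In Case~\ref{lemprop1}, activeness of $q$ gives $t<t(q)+n\le a_r+n<t_0(t')+n$ and $t\ge t'$, so $t\in[t',\,t_0(t')+n)$; by (M2) this forces $|\AVgt[t]|=k+1>k$, contradicting the hypothesis $|\AVgt[t]|\le k$. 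Hence no representative of $q$ is expunged, and combining with the two elementary facts yields a surviving point of $\RVgt[t]$ within $4\gamma$ of $q$, proving (a).

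Part (b) follows the same skeleton, the extra work being to produce the concrete covering representative of arrival $\ge t(q)$ in the presence of the multiplicity-$k_i$ and color structure. I would argue that while $\phi_\gamma(q)$ is in $\Ag$ (and, after it leaves, using the frozen set) $\repsCg^i(\phi_\gamma(q))$ consists of the up-to-$k_i$ most recent active points of color $i$ attracted by $\phi_\gamma(q)$; thus $q$ (itself of color $i$) is present unless displaced, in which case the displacing points are strictly newer than $q$, hence active, and within $\delta\gamma$ of $q$. Either way one obtains a covering point of arrival $\ge t(q)$ to which the expunging argument applies verbatim. I expect the main obstacle to be exactly this bookkeeping around \textsc{Cleanup}: proving (M1)--(M2) carefully, and, for (b), reconciling the overflow-and-freeze dynamics of $\repsCg^i$ (including the possibility that $\phi_\gamma(q)$ itself is expunged from $\Ag$ while $q$ is still active) with the guarantee that a same-color representative of arrival time $\ge t(q)$ always remains to carry the contradiction.
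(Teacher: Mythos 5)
Your proposal is correct and takes essentially the same approach as the paper's proof: both arguments exhibit, for each point $q$ satisfying the hypotheses, a surviving witness with the same $v$-attractor (resp.\ $c$-attractor) as $q$ --- hence within $4\gamma$ (resp.\ $\delta\gamma$) of $q$ by the triangle inequality --- obtained by following the chain of replacements/displacements and checking that neither expiration nor \textsc{Cleanup} can remove such a witness under the stated hypotheses. The only difference is one of detail: the paper asserts as ``immediate'' that a point relevant at time $t$ was relevant at all earlier times of its lifetime and that \textsc{Cleanup} removes only non-relevant points, and your monotonicity properties (M1)/(M2) together with the two-case contradiction are precisely a proof of that assertion.
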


\begin{proof}
For every $\gamma \in \Gamma$ and $t>0$, we say that a point $q \in W_t$ is
\emph{relevant for $\gamma$ and $t$} if $|\AVgt[t]| \leq k$
or $|\AVgt[t]| > k$ and $t(q) \ge \min_{v\in \AVgt[t]} t(v)$.
Clearly, the proof concerns only relevant points.  Also, recall that for every point $p$ of the stream, in \textsc{Update}$(p)$ we set $\psi_{\gamma}(p)$ (resp., $\phi_{\gamma}(p)$) to its assigned $v$-attractor (resp., $c$-attractor). Clearly, $d(p,\psi_{\gamma}(p)) \leq 2 \gamma$,  and $d(p,\phi_{\gamma}(p)) \leq \delta \gamma/2$.  (In fact, the values $\psi_{\gamma}(p)$ and $\phi_{\gamma}(p)$ are not explicitly stored in the data structures, as they are just needed for the analysis.) 

Let us focus on an arbitrary guess $\gamma$. We will  argue  that, at any time $t > 0$, for every point $q$ which is relevant for $\gamma$ and $t$, there exists a point $x \in \RVgt[t]$ and a point $y \in \Rgt[t]$
such that $\psi_{\gamma}(x) = \psi_{\gamma}(q)$ and $\phi_{\gamma}(y) = \phi_{\gamma}(q)$. The lemma will follow immediately since
$\psi_{\gamma}(x) = \psi_{\gamma}(q)$ implies
$d(q,x) \leq 4 \gamma$, and $\phi_{\gamma}(y) = \phi_{\gamma}(q)$ implies $d(q,y) \leq \delta \gamma$. 

Let us first concentrate on validation points. %
If $q$ is the point arrived at time $t$ (i.e., $t(q)=t$) then $q \in \RVgt[t]$, and it suffices to choose $x=q$. Otherwise, it is immediate to argue that since $q$ is relevant at time $t$, it has also been relevant at all times $t'$, with $t(q)\leq t' <t$, which implies that $q$ has never been been removed from $\RVgt[t']$ during some invocation of \textsc{Cleanup}, which only removes non-relevant points. Consequently, since  $q$ entered $\RVgt[t(q)]$, either $q$ is still in  $\RVgt[t]$ (and we set $x=q$), or $q$ was eliminated from $\RVgt$  due to the arrival of some point $x_1$ at time $t(x_1)>t(q)$.  Observe that $x_1$ is also relevant. If $x_1\in \RVgt[t]$, we set $x=x_1$, otherwise, we apply the same reasoning to $x_1$. Iterating the argument,  we determine a sequence of relevant points $x_1, x_2, \ldots, x_r$, with $t(q)<t(x_1)<\ldots <t(x_r)$ and 
$\psi(q)=\psi(x_1) = \ldots = \psi(x_r)$, with 
$x_r\in \RVgt[t]$, and we set $x=x_r$. 

The argument to show that there exists $y \in \Rgt[t]$ such that 
$\phi_{\gamma}(y) = \phi_{\gamma}(q)$ is virtually identical, and is just sketched for completeness. As argued above, since $q$ is relevant for $\gamma$ and $t$, it was never eliminated during some invocation of \textsc{Cleanup}. Thus, since  $q\in \Rgt[t(q)]$, either $q\in \Rgt[t]$ (and we set $y=q$) or we can identify a sequence of relevant points $y_1, y_2, \ldots, y_r$ of the same color as $q$,  with $t(q)<t(y_1)<\ldots <t(y_s)$ and 
$\phi(q)=\phi(y_1) = \ldots = \phi(y_s)$, with 
$y_s\in \Rgt[t]$, and we set $y=y_s$. 
\end{proof}
The next lemma (proof omitted) is a straightforward adaptation
of \cite[Lemma 3]{DBLP:journals/jbd/CeccarelloPPS23},
and provides an important technical tool,
which will be needed to show that the coreset extracted in {\sc Query}() contains a good solution to fair   center for the current window. 

\begin{lemma}
\label{lem:coreset-conditions}
Let $W_t$ be the window of points at time step $t$, and
let $I_{W_t}$ be the family of independent sets of the
partition matroid defined on $W_t$ (i.e., the feasible solutions to fair center for $W_t$). 
Let $Q \subseteq W_t$ ba a coreset that satisfies the following conditions:
\begin{itemize}
    \item[]\hspace*{-0.3cm}{\rm \bf (C1)} $d(p, Q) \le \delta\gamma, \quad \forall p \in W_t$
    \item[]\hspace*{-0.3cm}{\rm \bf (C2)} For each independent set $X\in I_{W_t}$ there exists an injective mapping 
        $\pi_X: X \to Q$ such that:
        \begin{itemize}
        \item $\{\pi_X(x) : x \in X\} \subseteq Q$ is an independent set w.r.t. $Q$
        \item for each $x \in X$, $d(x, \pi_X(x)) \le \delta\gamma$
        \end{itemize}
\end{itemize}
Then:
\begin{itemize}
\item[]\hspace*{-0.3cm}{\rm \bf (P1)} There exists a solution $S\subseteq Q$ to fair center for $W_t$ 
of radius $\leq \OPT_{W_t} + 2\delta\gamma$
\item[]\hspace*{-0.3cm}{\rm \bf (P2)} Every solution to fair center for $Q$
of radius $r$  is also a solution of cost at most $r + \delta\gamma$ for $W_t$.
\end{itemize}
\end{lemma}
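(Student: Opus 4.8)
The plan is to derive both conclusions from elementary triangle inequalities, using condition (C2) to establish (P1) and condition (C1) to establish (P2); the only genuinely structural ingredient is the observation that a set which is independent with respect to the submatroid induced on $Q$ is automatically a feasible fair-center solution for $W_t$. Indeed, if $S \subseteq Q$ is independent w.r.t. $Q$, then $S = Y \cap Q$ for some $Y \in I$, and since $S \subseteq Y$, downward closure gives $S \in I$; as $S \subseteq Q \subseteq W_t$, the set $S$ is a feasible solution for $W_t$. I would isolate this as a preliminary remark and reuse it in both parts.

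For (P1), I would let $X^*\subseteq W_t$ be an optimal fair-center solution for $W_t$, so that $X^*\in I_{W_t}$ and every $p\in W_t$ satisfies $d(p,X^*)\le \OPT_{W_t}$. Applying (C2) to $X=X^*$ yields an injective map $\pi_{X^*}\colon X^*\to Q$ whose image $S:=\{\pi_{X^*}(x):x\in X^*\}$ is independent w.r.t. $Q$; by the preliminary remark, $S$ is feasible for $W_t$. To bound its radius, I would fix $p\in W_t$ and let $x\in X^*$ be a closest center, so $d(p,x)\le \OPT_{W_t}$; then $d(p,S)\le d(p,\pi_{X^*}(x))\le d(p,x)+d(x,\pi_{X^*}(x))\le \OPT_{W_t}+\delta\gamma$, using the per-point guarantee $d(x,\pi_{X^*}(x))\le\delta\gamma$ from (C2). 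This already gives radius at most $\OPT_{W_t}+\delta\gamma$, hence in particular the claimed bound $\OPT_{W_t}+2\delta\gamma$.

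For (P2), I would take any fair-center solution $S'$ for $Q$ of radius $r$, i.e. $S'\subseteq Q$ independent w.r.t. $Q$ with $\max_{q\in Q}d(q,S')=r$; by the preliminary remark, $S'$ is feasible for $W_t$. Fixing $p\in W_t$, condition (C1) supplies $q\in Q$ with $d(p,q)\le\delta\gamma$, and since $d(q,S')\le r$, the triangle inequality gives $d(p,S')\le d(p,q)+d(q,S')\le r+\delta\gamma$. Taking the maximum over $p\in W_t$ yields the stated cost bound.

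The individual steps are routine, so the part that deserves the most care is the matroid bookkeeping rather than the metric estimates. Specifically, one must be explicit that ``independent w.r.t. $Q$'' (membership in the induced submatroid $I'=\{Y\cap Q:Y\in I\}$) coincides with genuine feasibility for $W_t$, which is precisely where downward closure enters, and that the injectivity of $\pi_{X^*}$ is exactly what prevents the image from collapsing into a smaller — and potentially radius-degrading — set of centers. For the partition matroid at hand this ultimately amounts to checking that $\pi_{X^*}$ does not over-populate any color class, but I would phrase the argument at the level of general matroids so that it applies verbatim in the matroid-center setting inherited from the cited Lemma~3.
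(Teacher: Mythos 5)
Your proof is correct. Note, however, that there is nothing in the paper to compare it against: the authors explicitly omit the proof of this lemma, deferring to ``a straightforward adaptation'' of Lemma~3 in the cited work of Ceccarello et al.\ on matroid center, so your argument supplies exactly the missing content. The two triangle-inequality steps and the submatroid feasibility remark (independence w.r.t.\ $Q$ plus downward closure implies membership in $I_{W_t}$) are precisely the ingredients one would expect, and they are all sound. In fact your argument for (P1) establishes the stronger bound $\OPT_{W_t} + \delta\gamma$: the extra $\delta\gamma$ of slack in the stated $\OPT_{W_t} + 2\delta\gamma$ is presumably inherited from the original lemma being adapted (where the proxy-distance guarantee in the analogue of (C2) is weaker), and your proof shows it is not needed under the conditions as stated here. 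Since a stronger bound trivially implies the stated one, this is a feature, not a bug.

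One small correction to your closing discussion: injectivity of $\pi_{X^*}$ is \emph{not} what the radius bound relies on. If the map collapsed two optimal centers $x_1, x_2$ onto the same point $q \in Q$, every point served by either center would still lie within $\OPT_{W_t} + \delta\gamma$ of $q$, and the (smaller) image would still be independent by downward closure, hence still feasible. Injectivity appears in (C2) because it is what the coreset construction in the proof of Theorem~\ref{thm:approx} naturally produces (there the mapping is built center by center, and injectivity is needed \emph{there} to argue, via maximality of the representative sets, that a fresh image point of the right color can always be found); it plays no role in deriving (P1) and (P2) from (C1) and (C2).
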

We are now ready to present the main novel ingredient of the analysis of our algorithm. Recall that for any point $p$ of the stream and every guess $\gamma$, {\sc Update}$(p)$ assigns $p$ to a
$c$-attractor $\phi_{\gamma}(p) \in \Ag$,
adding $p$ to $\repsCgt[t](\phi_{\gamma}(p))$. Note that $\phi_{\gamma}(p)=p$ if $p$ was added to $\Ag$. Also, for any point $a$ which at time $t(a)$ was added to $\Ag$, and for any $t \geq t(a)$, we define
\[
W_{\gamma,t}(a) = \{x \in W_t \; : \; \phi_{\gamma}(x) = a \}.
\]
Observe that from time $t(a)$ until the time when $a$ is expunged from $\Ag$, $W_{\gamma,t}(a)$ grows with the arrival of each new point $c$-attracted by $a$. After $a$ leaves $\Ag$, 
$W_{\gamma,t}(a)$ progressively looses points as they expire.  Also, for $t\geq t(a)$,  the set $\repsCgt[t](a)$ is updated at each arrival of a new point attracted by $a$, until $a$ is expunged from $\Ag$. After that happens,
the set $\repsCgt[t](a)$ is not further updated with new points, while its elements will then be gradually expunged from $\Rg$ as they expire. We have:
\begin{lemma}\label{lem:misrep}
For any time $t$ and valid guess $\gamma \in \Gamma$, the following holds. 
Let $a \in S$ be a point which arrived at $t(a) \leq t$ and, upon arrival, was added to $\Ag$. Then $W_{\gamma,t}(a) \cap R_{\gamma,t})$ is a maximal independent set w.r.t.\  $W_{\gamma,t}(a)$.
\end{lemma}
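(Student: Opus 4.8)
The plan is to reduce the statement to a per-color maximality claim about the representative sets and then to settle that claim by a finite-descent argument, isolating the expunging performed by \textsc{Cleanup} as the delicate point. First I would identify the set in question: I claim $W_{\gamma,t}(a)\cap\Rgt[t]=\repsCgt[t](a)$. The inclusion $\supseteq$ is clear, since every element of $\repsCgt[t](a)$ is active, lies in $\Rgt[t]$, and is $c$-attracted to $a$. For $\subseteq$, any $y\in W_{\gamma,t}(a)\cap\Rgt[t]$ is a $c$-representative of some attractor, but a point is inserted only into $\repsCg(\phi_{\gamma}(\cdot))$ upon arrival and is never moved afterward, so that attractor must be $\phi_{\gamma}(y)=a$. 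Independence of $\repsCgt[t](a)$ in the partition matroid is immediate from the overflow check (line~\ref{ln:cat-overflow-end}), which keeps $|\repsCgt[t]^i(a)|\le k_i$ for every color $i$, all other operations being deletions. Since the partition matroid decomposes into one cardinality bound per color, it remains to show that for each color $i$ the set $\repsCgt[t]^i(a)$ is maximal among the active, color-$i$ points $c$-attracted to $a$: either $|\repsCgt[t]^i(a)|=k_i$, or $\repsCgt[t]^i(a)$ already contains all of them.

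The crux is to understand how a point can leave $\repsCg^i(a)$ after being inserted at its arrival, and there are exactly three ways: expiration, overflow eviction (line~\ref{ln:cat-overflow-end}), and the bulk removal by \textsc{Cleanup} (line~\ref{ln:expunge}). The expunging step is the main obstacle, because a priori it could delete a representative that is still active at the valid time $t$. I would neutralize it using the certificate property already noted in the description of \textsc{Cleanup}: a point $q$ expunged at a time $t'<t$ has TTL strictly below $t_{min}$, the smallest TTL among the $k+1$ attractors then in $\AVgt[t']$. Because \textsc{Cleanup} always discards the soonest-to-expire attractor while every new arrival expires strictly later, the earliest expiration time of a $v$-attractor is nondecreasing; hence no attractor expires and $|\AVg|$ stays at $k+1$ throughout $[t',t'+t_{min})$, so $\gamma$ is invalid on that whole interval. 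As $\gamma$ is valid at $t$, we must have $t\ge t'+t_{min}$, and since $q$ expires strictly before $t'+t_{min}$ it has already expired by time $t$. Consequently no point expunged before $t$ belongs to $W_{\gamma,t}(a)$, and expunging can be disregarded in what follows.

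Finally I would prove maximality by finite descent. Assume, for some color $i$, that $|\repsCgt[t]^i(a)|<k_i$ yet some active, color-$i$ point $y_0$ that is $c$-attracted to $a$ is missing from $\repsCgt[t]^i(a)$. Being active at $t$, $y_0$ did not expire, and by the previous paragraph it was not expunged; hence it was removed by overflow at some time, at which instant $\repsCg^i(a)$ held exactly $k_i$ points. Overflow evicts the point of minimum TTL, i.e.\ the earliest-arrived among active points, so all $k_i$ survivors arrived strictly after $y_0$ and are therefore still active at $t$ and still $c$-attracted to $a$. Since $|\repsCgt[t]^i(a)|<k_i$, these $k_i$ survivors cannot all lie in $\repsCgt[t]^i(a)$, so at least one of them, $y_1$, is again missing, with $t(y_1)>t(y_0)$. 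Repeating the argument on $y_1$ yields an infinite, strictly increasing sequence of arrival times bounded above by $t$ — a contradiction. Thus $\repsCgt[t]^i(a)$ is maximal for every color, and $\repsCgt[t](a)=W_{\gamma,t}(a)\cap\Rgt[t]$ is a maximal independent set with respect to $W_{\gamma,t}(a)$.
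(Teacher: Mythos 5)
Your proof is correct and follows essentially the same route as the paper's: both hinge on the two key facts that overflow eviction always removes the oldest same-color representative (so the surviving representatives are the most recently arrived, still-active points attracted to $a$), and that points expunged by \textsc{Cleanup} must have expired before $\gamma$ can become valid again, so expunction is immaterial at valid query times. Your per-color finite-descent formulation and the explicit interval-of-invalidity argument are simply a more detailed rendering of the paper's invariant-maintenance argument, not a different approach.
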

\begin{proof}
It is immediate to see that at each time $t\geq t(a)$ for which $a \in \Agt[t]$, 
the set $\repsCgt[t](a) = W_{\gamma,t}(a) \cap R_{\gamma,t})$
is always a maximal independent set w.r.t.\ $W_{\gamma,t}(a)$,  since a point  $x\in W_{\gamma,t}(a)$ is always inserted into  $repsC_{\gamma, t(x)}(a)$, and may cause the expunction from the set of a single  point  of the same color $i_x$ (the oldest such point), only when there are already $k_{i_x}$ points of color $i_x$ present in $repsC_{\gamma, t(x)-1}(a)$. 
Let $\hat{t} > t(a)$ be the time when $a$ is expunged from 
$A_{\gamma,\hat{t}}$. As already observed, for every $t\geq \hat{t}$, $W_{\gamma,t}(a)$ does not acquire new points and shrinks due to the natural expiration of the points $c$-attracted by $a$. If points in $\Rgt[t]$ were removed only on their expiration, then the maximality of 
$W_{\gamma,t}(a) \cap \Rgt[t]$ w.r.t. $W_{\gamma,t}(a)$ would immediately follow by the fact that, as long as $a$ is
in $\Agt[t]$, 
for each color $i$, $\repsCgt[t](a)$ maintains the most recently arrived points of that color. However, whenever $\gamma$ becomes  an invalid guess, extra points from $\Rgt[t]$ may be expunged (Line~\ref{ln:expunge} of procedure {\sc Cleanup}). 
However, these expunged  points, being older than the oldest point in $AV_{\gamma,t}$, will all have expired by the time step $t'> t$ (if any) when $|AV_{\gamma,t'}|$ drops below $k+1$, and thus $\gamma$ becomes again a valid guess, so their premature elimination will not affect the maximality of $W_{\gamma,t'}\cap \Rgt[t']$.
\end{proof}

The following theorem establishes the quality of the solution returned by Procedure {\sc Query}.
\begin{theorem} 
\label{thm:approx}
Let $\alpha$ be the approximation ratio featured by the sequential fair center algorithm $\mathcal{A}$ used in Procedure {\sc Query}().
For fixed $\epsilon \in (0,1)$ and $\beta > 0$, by setting
\[
\delta = {\epsilon \over (1+\beta)(1+2\alpha)}
\]
we have that if Procedure {\sc Query} is run at time $t$,
then the returned solution is an $(\alpha + \epsilon)$-approximate solution to fair center for the current window $W_t$.
\end{theorem}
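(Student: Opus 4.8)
The plan is to show that the sequential algorithm $\mathcal{A}$, invoked on the coreset $Q = \Rgt[t]$ associated with the guess $\gamma$ selected by {\sc Query}, returns a solution whose radius on $W_t$ is at most $(\alpha+\epsilon)\OPT_{W_t}$. Two facts drive the argument: (i) $Q$ is a faithful coreset, i.e.\ it satisfies conditions \textbf{(C1)} and \textbf{(C2)} of Lemma~\ref{lem:coreset-conditions}, so that \textbf{(P1)}--\textbf{(P2)} apply; and (ii) the selected $\gamma$ is a tight overestimate of the optimum, namely $\gamma \le (1+\beta)\OPT_{W_t}$. Granting these, the approximation bound follows from a short chain of inequalities together with the prescribed value of $\delta$.

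First I would verify the coreset conditions for the selected $\gamma$, which is valid by construction of the loop in {\sc Query} (it ranges only over guesses with $|\AVgt[t]| \le k$). Condition \textbf{(C1)}, i.e.\ $d(p,Q)\le\delta\gamma$ for every $p\in W_t$, is exactly case~\ref{lemprop1}(b) of Lemma~\ref{lem:invariants}. The delicate part is \textbf{(C2)}: given any feasible $X\in I_{W_t}$, I must exhibit an injective map $\pi_X\colon X\to Q$ with independent image and $d(x,\pi_X(x))\le\delta\gamma$. I would build $\pi_X$ one $c$-attractor at a time. Partition $X$ by the $c$-attractor assignment, setting $X_a = \{x\in X : \phi_\gamma(x)=a\} \subseteq W_{\gamma,t}(a)$. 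By Lemma~\ref{lem:misrep}, $\repsCgt[t](a)=W_{\gamma,t}(a)\cap\Rgt[t]$ is a maximal independent set w.r.t.\ $W_{\gamma,t}(a)$, so for every color $i$ we have $|\repsCgt[t]^{i}(a)| = \min\{k_i,\,n_i(a)\}$, where $n_i(a)$ is the number of color-$i$ points in $W_{\gamma,t}(a)$. Since $X$ is independent it has at most $k_i$ points of color $i$ overall, and at most $n_i(a)$ of them lie in $X_a$; hence the color-$i$ points of $X_a$ number at most $|\repsCgt[t]^{i}(a)|$ and can be injected into the color-$i$ representatives of $a$. Carrying this out for every color and every attractor (the sets $\repsCgt[t](a)$ are pairwise disjoint across distinct $a$) yields a global injection $\pi_X$. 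It preserves colors, so $\{\pi_X(x)\}$ has the same per-color cardinalities as $X$ and is therefore independent w.r.t.\ $Q$; and from $d(x,a)\le\delta\gamma/2$ and $d(\pi_X(x),a)\le\delta\gamma/2$ the triangle inequality gives $d(x,\pi_X(x))\le\delta\gamma$. This establishes \textbf{(C2)}, and with it \textbf{(P1)} and \textbf{(P2)}.

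Next I would prove the lower bound $\OPT_{W_t}\ge \gamma/(1+\beta)$. Let $\gamma' = \gamma/(1+\beta)$ be the guess of $\Gamma$ immediately preceding $\gamma$. Since {\sc Query} scans guesses in increasing order and did not return on $\gamma'$, one of two certificates must exist: either $\gamma'$ is invalid, so $|AV_{\gamma',t}| = k+1$ and these are $k+1$ active points at pairwise distance $\ge 2\gamma'$; or $\gamma'$ is valid but the greedy selection on $RV_{\gamma',t}$ produced more than $k$ points, i.e.\ $k+1$ active points at pairwise distance $> 2\gamma'$. In either case $W_t$ contains $k+1$ points at pairwise distance $\ge 2\gamma'$, which cannot be covered by $k$ centers with radius smaller than $\gamma'$ (a fortiori not by any feasible fair solution), whence $\OPT_{W_t}\ge\gamma'=\gamma/(1+\beta)$. (If $\gamma$ is the smallest guess of $\Gamma$ the bound holds trivially from $\gamma\le\dmin$.)

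Finally I would assemble the pieces. Running $\mathcal{A}$ on $Q$ yields a feasible $C\subseteq Q$ with $r_C(Q)\le\alpha\,\OPT_Q$, while \textbf{(P1)} gives $\OPT_Q\le\OPT_{W_t}+2\delta\gamma$ (any $S$ covering $W_t$ also covers $Q\subseteq W_t$ within the same radius). Transferring back to the window via \textbf{(P2)} gives $r_C(W_t)\le r_C(Q)+\delta\gamma\le \alpha(\OPT_{W_t}+2\delta\gamma)+\delta\gamma = \alpha\,\OPT_{W_t}+(2\alpha+1)\delta\gamma$. Substituting $\gamma\le(1+\beta)\OPT_{W_t}$ and $\delta=\epsilon/((1+\beta)(1+2\alpha))$ collapses the additive term to exactly $\epsilon\,\OPT_{W_t}$, yielding the claimed $(\alpha+\epsilon)$-approximation. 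I expect the main obstacle to be \textbf{(C2)}: distilling from Lemma~\ref{lem:misrep} a single injection that is simultaneously color-preserving (hence feasibility-preserving) and distance-respecting across all attractor groups is precisely where the novel, fairness-aware coreset construction must pay off; the lower bound and the final arithmetic are comparatively routine.
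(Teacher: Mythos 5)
Your proof is correct, and its skeleton coincides with the paper's: condition (C1) via Lemma~\ref{lem:invariants}, condition (C2) via Lemma~\ref{lem:misrep}, the bound $\hat{\gamma} \le (1+\beta)\OPT_{W_t}$ extracted from the guess that \textsc{Query} discarded just before $\hat{\gamma}$, and the closing (P1)/(P2) arithmetic with the prescribed $\delta$ all match the paper. The genuine difference is in how you establish (C2), which both you and the paper identify as the crux. The paper constructs $\pi_X$ incrementally: it maintains the inductive invariant that the hybrid set $Y(h)$ (images of the first $h$ points of $X$ together with the not-yet-mapped remainder) stays independent w.r.t.\ $W_t$, and at each step it obtains an unused same-color point of $Z = W_{\hat{\gamma},t}(a)\cap Q$ by contradiction with the maximality guaranteed by Lemma~\ref{lem:misrep}. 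You instead give a one-shot construction: partition $X$ by attractor into the sets $X_a$, observe that for a partition matroid every maximal independent set w.r.t.\ $W_{\gamma,t}(a)$ has exactly $\min\{k_i,n_i(a)\}$ points of each color $i$, deduce that the number of color-$i$ points in $X_a$ is at most $\min\{k_i,n_i(a)\}$ from the independence of $X$ and from $X_a\subseteq W_{\gamma,t}(a)$, and then glue the per-attractor, per-color injections into a global one using the disjointness of the representative sets across distinct attractors. The two arguments rest on the same underlying counting fact, but yours avoids the induction and the hybrid-set invariant, which makes it shorter and makes transparent that only the partition structure is needed (no exchange step ever occurs); the paper's incremental formulation is the one that would adapt more naturally if the fairness constraint were replaced by a general matroid constraint. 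Two minor remarks: the attractors in $AV_{\gamma',t}$, and the points greedily selected from $RV_{\gamma',t}$, are in fact at pairwise distance strictly greater than $2\gamma'$ (your weaker $\ge 2\gamma'$ still suffices for the pigeonhole bound); and your inference $\OPT_Q\le\OPT_{W_t}+2\delta\hat{\gamma}$ from (P1) is exactly the step the paper takes, valid because the solution promised by (P1) is a feasible subset of $Q$, so its radius measured on $Q$ upper-bounds $\OPT_Q$.
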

\begin{proof}
Let $\hat{\gamma}$ be the guess such that the solution returned
by \textsc{Query} is computed by running
$\mathcal{A}$ on the coreset $Q=\Rg$. We first show that  conditions (C1) and (C2) of Lemma~\ref{lem:coreset-conditions} hold for $Q$. First observe that by construction $\hat{\gamma}$ is chosen so that $|AV_{\hat{\gamma},t}| \leq k$, therefore condition (C1) holds by virtue of Property~\ref{lemprop1}.(b) of Lemma~\ref{lem:invariants}.
To show that (C2) also holds, consider any independent set $X$ w.r.t. $W_t$. We construct the required injective mapping 
 $\pi_X: X \to Q$ incrementally, one point at a time. 
Let $X=\{x_j \; : \; 1 \leq j \leq |X|\}$. Suppose that we
have fixed the mapping for the first $h \geq 0$ elements of $X$, and
assume, inductively, that 
$$
Y(h)= \{\pi_X(x_j) \: : \; 1 \leq j \leq h\} \cup \{x_j \; : \; h < j \leq |X|\}
$$
is an independent set w.r.t. $W_t$ of size $|X|$, and such
that, for $1 \leq j \leq h$, $\pi_X(x_j) \in Q$ and 
$d(x_j,\pi_X(x_j)) \leq \delta \gamma$.
We now show how to extend the mapping to index $h+1$. We distinguish between two cases. If $x_{h+1} \in Q$, then we simply set $\pi_X(x_{h+1}) = x_{h+1}$. Since  $Y(h+1)=Y(h)$ and $d(x_{h+1},\pi_X(x_{h+1})) = 0 \leq \delta \gamma$, the mapping is correctly extended to index $h+1$. Conversely, if $x_{h+1} \notin Q$, observe that $x_{h+1}\in W_{\hat{\gamma},t}(a)$, where  $a=\phi(x_{h+1})$. Since $\hat{\gamma}$ is a valid guess, by Lemma~\ref{lem:misrep} it follows that 
$Z= W_{\hat{\gamma},t}(a) \cap Q$ is a maximal independent set w.r.t.\ $W_{\hat{\gamma},t}(a)$. Then, it will always be possible to set $\pi_X(x_{h+1})=y_{h+1}$, where $y_{h+1}$ is a point of $Z$ of the same color as $x_{h+1}$, and such that $y_{h+1}\neq \pi_X(x_j)$, for $1\leq j\leq h$, or otherwise the number of points in $X\cap W_{\hat{\gamma},t}(a)$ of the same color as $x_{h+1}$ would have to be larger than those in $Z$, contradicting the maximality of the latter subset w.r.t.\ $W_{\hat{\gamma},t}(a)$. By the properties of the partition matroid, 
it follows that $Y(h+1)$ is still an independent set w.r.t.\ $W_t$. 
Also, since both $x_{h+1}$ and $\pi_X(x_{h+1})=y_{h+1}$ belong to $W_{\hat{\gamma}, t}(a)$, we have that $d(x_{h+1},\pi_X(x_{h+1})) \leq d(x_{h+1},a)+d(a,\pi_X(x_{h+1}) \leq \delta\gamma$. Therefore, given that, by construction, 
$\pi_X(x_{h+1}) \in Q$, the mapping is correctly extended to index $h+1$. By iterating the argument up to index $|X|$, 
we have that $Y(|X|)\subseteq Q$, and
 we conclude that condition (C2) holds for $Q\subseteq W_t$. Therefore, by
Lemma~\ref{lem:coreset-conditions}, since (C1) and (C2) hold for $Q$, Properties (P1) and (P2) also hold. Property (P1) implies that the optimal solution to fair center for $Q$ has radius $\leq \OPT_{W_t}+2 \delta \hat{\gamma}$. Therefore, 
when invoked on $Q$, algorithm $\mathcal{A}$ will return a solution of radius $\leq \alpha (\OPT_{W_t}+2 \delta \hat{\gamma})$ for $Q$, which,
by Property (P2), is also a solution of radius
$\leq \alpha \OPT_{W_t}+(1+2 \alpha) \delta \hat{\gamma}$ for $W_t$. Now, let $r^*_k$ be the radius of the optimal solution to unconstrained $k$-center for $W_t$. 
It is easy to see that $\hat{\gamma} \leq (1+\beta) r^*_k$. Indeed, the inequality is trivial if $\hat{\gamma} = \dmin$, while, otherwise, it follows from the fact that Procedure \textsc{Query} discarded guess $\gamma = \hat{\gamma}/(1+\beta)$ because $k+1$ points of $W_t$ at pairwise distance $2 \gamma$ exist and, clearly, two of these points must be closest to the same center of the optimal solution of 
unconstrained $k$-center for $W_t$ of radius $r^*_k$. Since 
$r^*_k \leq \OPT_{W_t}$, we have that 
$\hat{\gamma} \leq (1+\beta) \OPT_{W_t}$. 
The above discussion and the choice of $\delta$ immediately imply that \textsc{Query}
will return a solution whose radius, with respect to the entire window $W_t$, is at most 
\[
\alpha \OPT_{W_t}+(1+2 \alpha) \delta \hat{\gamma} \leq 
\alpha \OPT_{W_t} + (1+2 \alpha) {\epsilon \over (1+\beta)(1+2\alpha)} (1+\beta) \OPT_{W_t}
= (\alpha + \epsilon) \OPT_{W_t}.
\]
\end{proof}

By using the algorithm of \citep{DBLP:conf/icml/JonesNN20} as algorithm $\cal A$ in {\sc Query}, we obtain:
\begin{corollary}
For fixed $\epsilon \in (0,1)$, at any time $t>0$, procedure \textsc{Query} can be used to compute a $(3+\epsilon)$-approximate solution to fair center for window $W_t$.
\end{corollary}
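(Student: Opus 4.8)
The plan is to instantiate Theorem~\ref{thm:approx} with a concrete choice of the sequential subroutine $\mathcal{A}$. The key observation is that the algorithm of \citep{DBLP:conf/icml/JonesNN20} is a polynomial-time sequential algorithm for fair center guaranteeing an approximation ratio $\alpha = 3$, as recalled in the related-work discussion. I would therefore adopt this algorithm as the subroutine $\mathcal{A}$ invoked in Line~\ref{ln:solution} of Procedure \textsc{Query}, i.e.\ the one computing $\mathcal{A}(\Rg)$ on the extracted coreset.

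With this choice fixed, the argument is a direct application of Theorem~\ref{thm:approx}. Setting $\delta = \epsilon / \left((1+\beta)(1+2\cdot 3)\right) = \epsilon / \left(7(1+\beta)\right)$, the theorem guarantees that the solution returned by \textsc{Query} at any time $t>0$ is an $(\alpha + \epsilon) = (3 + \epsilon)$-approximate solution to fair center for the current window $W_t$. Since $\epsilon \in (0,1)$ and $\beta > 0$ are fixed, the resulting $\delta$ lies in $(0,1)$ as required by the hypotheses, so all the assumptions of Theorem~\ref{thm:approx} are met and the conclusion follows.

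The proof involves no genuine obstacle: all the technical work---establishing the coreset conditions (C1) and (C2) via Lemmas~\ref{lem:invariants} and~\ref{lem:misrep}, and converting the coreset guarantee into a window guarantee via Lemma~\ref{lem:coreset-conditions}---is already carried out in the proof of Theorem~\ref{thm:approx}. The only point requiring a moment of care is to confirm that $\alpha = 3$ is indeed the approximation ratio delivered by the subroutine of \citep{DBLP:conf/icml/JonesNN20}, so that substituting $\alpha = 3$ into the generic bound $(\alpha+\epsilon)$ yields precisely the claimed constant $3+\epsilon$.
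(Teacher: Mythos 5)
Your proposal is correct and follows exactly the paper's own route: the corollary is obtained by instantiating the sequential subroutine $\mathcal{A}$ in \textsc{Query} with the algorithm of \citep{DBLP:conf/icml/JonesNN20}, whose approximation ratio $\alpha=3$ plugged into Theorem~\ref{thm:approx} immediately gives the $(3+\epsilon)$ bound. Your additional checks (the explicit value $\delta = \epsilon/(7(1+\beta))$ and that it lies in $(0,1)$) are harmless elaborations of what the paper leaves implicit.
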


The following theorem bounds the size of the working memory required by our algorithm.
\begin{theorem}\label{thm:space}
Under the same parameter configuration of Theorem~\ref{thm:approx},
at any time $t$ during the processing of stream $S$, the sets stored in in the working memory contain
\[
\BO{k^2 \frac{\log \Delta}{\log(1+\beta)}\left(\frac{c}{\epsilon}\right)^{D_{W_t}}}
\]
points overall, where $c=32 (1+\beta)(1+2\alpha)$, $D_{W_t}$ is the doubling dimension of the current window $W_t$ and $\Delta$ is the aspect ratio of $S$.
\end{theorem}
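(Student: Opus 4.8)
The plan is to bound the memory guess-by-guess and then sum over $\Gamma$. Since $\Gamma$ ranges over the exponents between $\lfloor\log_{1+\beta}\dmin\rfloor$ and $\lceil\log_{1+\beta}\dmax\rceil$, we have $|\Gamma|=\BO{\log\Delta/\log(1+\beta)}$, which is precisely the first factor of the claim. For a fixed guess $\gamma$ the stored points are $\AVg\cup\RVg\cup\Ag\cup\Rg$, and it suffices to bound each set separately (counting a point that lies in several sets more than once only inflates the estimate). Two of the four bounds are immediate: procedure \textsc{Cleanup} enforces $|\AVg|\le k+1$ at all times; and, since every $\repsCg(a)$ is an independent set of the rank-$k$ partition matroid, $|\Rg|=\sum_{a\in\Ag}|\repsCg(a)|\le k\,|\Ag|$. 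Hence the whole argument reduces to bounding $|\Ag|$ and $|\RVg|$, and the factor $k$ relating $\Rg$ to $\Ag$ is exactly what turns the per-ball count into the $k^2$ of the statement.

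The counting engine is the doubling dimension combined with the separation of the $c$-attractors. Recall that upon arrival each point $p$ is assigned a $v$-label $\psi_\gamma(p)$ with $d(p,\psi_\gamma(p))\le 2\gamma$; consequently every stored active point—every $c$-attractor and every representative—lies within $\BO{\gamma}$ of its label. Let $V^\ast$ denote the set of \emph{distinct} $v$-labels carried by the points currently stored for $\gamma$. The key reduction is that it suffices to prove $|V^\ast|=\BO{k}$. Indeed, the points of $\RVg$ are the (at most $k+1$) current representatives of active $v$-attractors together with at most one frozen representative per expired label still present, so $|\RVg|=\BO{|V^\ast|}=\BO{k}$. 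For $\Ag$, since distinct $c$-attractors are at pairwise distance $\ge\delta\gamma/2$, I would cover the points sharing a common label by a ball of radius $\BO{\gamma}$ about that label and then subdivide it, by the defining property of $D_{W_t}$, into $\BO{(c/\epsilon)^{D_{W_t}}}$ balls of radius $\delta\gamma/4$, each containing at most one $c$-attractor; here the radius and the constant $c=32(1+\beta)(1+2\alpha)$ are matched by a routine calculation using $1/\delta=(1+\beta)(1+2\alpha)/\epsilon$. This gives $|\Ag|=\BO{k\,(c/\epsilon)^{D_{W_t}}}$ and hence $|\Rg|=\BO{k^2\,(c/\epsilon)^{D_{W_t}}}$.

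Everything therefore hinges on the single combinatorial claim $|V^\ast|=\BO{k}$, and establishing it is the main obstacle. The active $v$-attractors account for at most $k+1$ labels; the difficulty lies with the \emph{expired} labels that still retain a stored descendant—a representative or a $c$-attractor assigned to a $v$-attractor that has already left $\AVg$. Such descendants may legitimately survive (they can be very recent points that attached to an attractor shortly before it expired) and can sit far from every current $v$-attractor, so one cannot simply cover by the live attractors. The handle is the purge invariant of \textsc{Cleanup}: whenever $|\AVg|$ reaches $k+1$, Line~\ref{ln:expunge} discards from $\Ag$, $\RVg$, $\Rg$ every point whose \TTL\ is below that of the youngest-to-expire $v$-attractor. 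I would argue that this prevents distinct expired labels from accumulating, since each such event certifies, for $t_{min}$ further steps, a lower bound on the optimal radius and eliminates every descendant older than that horizon, so that at any moment only $\BO{k}$ labels—those tied to currently or recently certified attractors—can keep a stored point. Making this precise, so that the cleanup collapses the entire relevant history into $\BO{k}$ labels at radius $\BO{\gamma}$, is the technical heart of the proof; once it is in place, summing $|\AVg|+|\RVg|+|\Ag|+|\Rg|=\BO{k^2(c/\epsilon)^{D_{W_t}}}$ over the $\BO{\log\Delta/\log(1+\beta)}$ guesses yields the stated bound.
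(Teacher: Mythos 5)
Your decomposition (sum over the $\BO{\log\Delta/\log(1+\beta)}$ guesses, then bound $\AVg$, $\RVg$, $\Ag$, $\Rg$ separately, with doubling-dimension packing for $\Ag$ and a factor $k$ per attractor for representatives) follows the paper's skeleton, but it has a genuine gap exactly where the paper's proof does its novel work: the identity $|\Rg|=\sum_{a\in\Ag}|\repsCg(a)|$ is false. By construction, when a $c$-attractor $a$ expires or is expunged from $\Ag$, the set $\repsCg(a)$ is \emph{not} deleted: its elements remain in $\Rg$ until they expire or are purged by \textsc{Cleanup}. So $\Rg$ contains, besides the representatives of current attractors, the ``orphaned'' representatives of dead attractors (the set the paper calls $\Og$), and these are precisely the points your count never sees. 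Your spatial framework cannot recover them: orphaned representatives carry no pairwise separation (unlike $c$-attractors), so packing into balls of radius $\BO{\gamma}$ around the $\BO{k}$ surviving $v$-labels bounds nothing --- over time, arbitrarily many attractors can be created at essentially the same location, each one dying and leaving up to $k$ representatives in the same tiny region. What is needed, and what the paper supplies, is a \emph{temporal} argument: enumerate by arrival the points $a_1,a_2,\ldots$ ever added to $\Ag$; the very bound $|\Ag|\le k'$ you establish forces $a_i$ to have left $\Ag$ before $a_{i+k'+1}$ enters, so every representative of $a_i$ arrived before $a_{i+k'+1}$; since both removal mechanisms (expiration and the TTL-threshold purge) delete points consistently with arrival order, once the most recently removed attractor $a_j$ is gone, all representatives of every $a_\ell$ with $\ell\le j-(k'+1)$ are gone too. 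Hence only $k'+1$ dead attractors can still own stored representatives, giving $|\Og|\le k(k'+1)$ and closing the count.

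The second gap is one you flag yourself: the claim $|V^\ast|=\BO{k}$ is asserted, not proved, and your sketch (each cleanup ``certifies a lower bound \ldots\ and eliminates every descendant older than that horizon'') is an intuition, not an argument; the clean way to prove it is again the arrival-order reasoning above, applied to $v$-attractors, which must leave $\AVg$ within $k+2$ subsequent attractor creations. Note also the difference in where the effort goes: the paper imports the bounds $|\AVg|\le k+1$, $|\RVg|\le 2(k+1)$ and $|\Ag|\le 2(k+1)(32/\delta)^{D_{W_t}}$ as Facts~1 and~2 from the analysis of the unconstrained algorithm of Pellizzoni et al., and devotes its entire new proof to $\Og$. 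In other words, the part you reduce everything to is the part that is already known, while the part you dismiss as immediate is the part that actually requires proof.
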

\begin{proof}
Consider a time $t$ and a guess $\gamma \in \Gamma$. 
The following facts can be proved through the same arguments
used in the proof of
\cite[Theorem 2]{PellizzoniPP22} for the case of unconstrained $k$-center on a window of doubling dimension $D_{W_t}$.
\begin{fact} \label{fact1}
At each time $t$, $|\AVgt|\leq k+1$ and $\RVgt\leq 2(k+1)$. 
\end{fact} 
\begin{fact} \label{fact2}
$\Ag$ contains at most $k' = 2(k+1)(32/\delta)^{D_{W_t}}$ points. 
\end{fact} 
It remains to upper bound the size of $\Rg$. Clearly, by Fact~\ref{fact2}, there can be altogether at most  $k \cdot k'$ points in $\Rg$ contained in the representative sets $\repsCg$ of points currently in $A_\gamma$. We are then left to bound the number of points contained in representative sets of points not in $A_\gamma$ at the current time. Call $O_\gamma$ the union of such sets.  Let $a_i$, with $i=1,2, \ldots$ be an enumeration of all points that upon arrival have been added to $\Ag$. For every $i \geq 1$
it must hold that $a_i$ has expired or has been expunged from $\Ag$
by the time $a_{i+k'+1}$ enters $\Ag$, or otherwise $\Ag$ would have size greater than $k'$ at time $t(a_{i+k'+1})$, which would 
contradict Fact~\ref{fact2}. Consequently, any point
$x$ added to $\repsCg(a_i)$ must have arrived while $a_i$
was still in $\Ag$, hence before $a_{i+k'+1}$, thus
$\TTL(x) < \TTL(a_{i+k'+1})$ at any time when they are both active. Now, consider the current time, and let $a_j$ be the point which was most recently removed from $\Ag$. By the above property, any point $x$ 
that belonged to $\repsCg(a_{\ell})$, with $\ell \leq j-(k'+1)$,
arrived prior to $a_j$, hence cannot be in memory at time $t$.
Hence, $\Og$ can only comprise points that belonged
to $\repsCg(a_{\ell})$, with $j-(k'+1) < \ell \leq j$, which immediately implies that $|\Og| \leq k \cdot k'$.
The theorem follows by considering that  in Theorem~\ref{thm:approx} we fixed $\delta = \epsilon/((1+\beta)(1+ 2\alpha))$, and that the number of guesses $|\Gamma|= \BO{\log \Delta/\log (1+\beta)}$. 
\end{proof}

\sloppy
For what concerns  running time, it can be easily argued that Procedure \textsc{Update} requires time linear in the aggregate size of the stored sets, while the Procedure \textsc{Query} first executes 
$\BO{\log \Delta/\log (1+\beta)}$ attempts at discovering suitable unconstrained $k$-clusterings on sets of $O(k)$ points, each requiring $O(k^2)$ time, and then invokes the sequential fair center algorithm $\cal A$ on some $\Rg\cup\Og$, for some guess $\gamma\in \Gamma$. 
The above discussion immediately implies the following theorem.
\begin{theorem}\label{thm:time}
Under the same parameter configuration of Theorem~\ref{thm:space}, Procedure \textsc{Update} runs in time
\[
\BO{k^2 \frac{\log \Delta}{\log(1+\beta)}\left(\frac{c}{\epsilon}\right)^{D_{W_t}}}
\]
while procedure \textsc{Query} runs in time
\[
\BO{
k^2 \frac{\log\Delta}{\log(1+\beta)} +
T_{\mathcal{A}}\left(k^2 \left(\frac{c}{\epsilon}\right)^{D_{W_t}}\right)
}
\]
where $T_{\mathcal{A}}(m)$ is the  running time of sequential algorithm $\mathcal{A}$ on an instance of size $m$.
\end{theorem}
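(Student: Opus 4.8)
The plan is to charge all work against the set sizes already controlled by Theorem~\ref{thm:space} and its constituent Facts~\ref{fact1} and~\ref{fact2}. For \textsc{Update}, I would first verify that, for a fixed guess $\gamma$, every line of \textsc{Update}$(p)$ (together with the subordinate call to \textsc{Cleanup}) runs in time linear in the aggregate size of the sets $\AVg,\Ag,\RVg,\Rg$ maintained for that guess: deleting the expired point $x$ via back-pointers, scanning $\AVg$ and $\Ag$ to build $EV$ and $E$, selecting the $c$-attractor in $E$ of minimum $|\repsCg^i|$, and---dominating everything---the TTL-based expunging pass of \textsc{Cleanup} (Line~\ref{ln:expunge}) over $\Ag,\RVg,\Rg$. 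By Facts~\ref{fact1} and~\ref{fact2} and the bound $|\Rg|=\BO{k\cdot k'}$ established in the proof of Theorem~\ref{thm:space} (with $k'=\BO{k(c/\epsilon)^{D_{W_t}}}$), this is $\BO{k\cdot k'}$ per guess. Summing over the $|\Gamma|=\BO{\log\Delta/\log(1+\beta)}$ guesses gives the claimed update bound; equivalently, a single arrival performs a constant number of passes over the stored sets, so its cost coincides with the total space.

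For \textsc{Query}, I would split the cost into the guess-search phase and the final call to $\mathcal{A}$. The search loop iterates over at most $|\Gamma|$ guesses with $|\AVg|\le k$; for each, the greedy farthest-point selection scans $\RVg$, which has $\BO{k}$ points by Fact~\ref{fact1}, against a candidate set $C$ whose size never exceeds $k+1$ before the procedure breaks, costing $\BO{k^2}$ per guess and $\BO{k^2\log\Delta/\log(1+\beta)}$ in total. Once the valid guess $\hat\gamma$ is fixed, the single invocation $\mathcal{A}(\Rg)$ runs on a coreset of size $\BO{k\cdot k'}=\BO{k^2(c/\epsilon)^{D_{W_t}}}$ by Theorem~\ref{thm:space}, contributing $T_{\mathcal{A}}(k^2(c/\epsilon)^{D_{W_t}})$; adding the two phases yields the stated query bound.

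The argument is essentially bookkeeping, so the only delicate point---and the one I expect to require care---is making the linear-time claims for \textsc{Update} hold at the data-structure level, in particular ensuring that the deletion of the expired point and the TTL-based expunging in \textsc{Cleanup} are never charged more than the size of the sets they touch. I would resolve this by storing, for every point, explicit membership pointers into each of $\AVg,\Ag,\RVg,\Rg$ and into its $\repsCg^i$ list, and by keeping each $\repsCg^i$ ordered by $\TTL$ so that the minimum-$\TTL$ removal in the overflow case is $\BO{1}$ amortized. With these structures every step is bounded by the size of the sets it reads, and the aggregate-size accounting above carries through unchanged.
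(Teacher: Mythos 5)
Your proposal is correct and follows essentially the same route as the paper: the paper likewise charges \textsc{Update} as time linear in the aggregate size of the stored sets (bounded via Theorem~\ref{thm:space}), and splits \textsc{Query} into $\BO{\log\Delta/\log(1+\beta)}$ clustering attempts on $O(k)$-point sets at $O(k^2)$ each, plus one invocation of $\mathcal{A}$ on a coreset of size $\BO{k^2(c/\epsilon)^{D_{W_t}}}$. Your additional data-structure bookkeeping (membership pointers, TTL-ordered representative lists) merely makes rigorous the linear-time claim that the paper states as "easily argued," so there is no substantive difference.
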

It is important to remark that for windows of constant doubling dimension, both space and time requirements of our  algorithm  are independent of the window size $n$. 

The space and time requirements of our algorithm can be made independent of the dimensionality of the stream, at the expense of a worsening in the approximation guarantee. We have:
\begin{corollary}\label{cor:gen}
Procedures {\sc Update}, {\sc Cleanup} and {\sc Query} can be modified so to yield, at any time $t$, a $(31+O(\epsilon))$-approximate solution, storing $\BO{k^2\log\Delta/\epsilon}$ points in the working memory, and requiring $\BO{k^2\log\Delta/\epsilon}$ update  and $\BO{k^2\log\Delta/\epsilon+T_{\cal A}(k^2)}$ query time. 
\end{corollary}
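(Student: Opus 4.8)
The plan is to eliminate the exponential dependence on $D_{W_t}$ that arises in Fact~\ref{fact2}, which bounds $|\Ag|$ by $2(k+1)(32/\delta)^{D_{W_t}}$. That blow-up is a direct consequence of choosing the $c$-attractor separation radius $\delta\gamma/2$ to be a $\delta$-fraction of the $v$-attractor radius $2\gamma$: since the $c$-attractors are a packing at scale $\delta\gamma$ inside balls of radius $\Theta(\gamma)$ around the $v$-attractors, a doubling-dimension argument yields the $(1/\delta)^{D_{W_t}}$ factor. To remove it, I would abandon the attempt to make the coreset radius $\delta\gamma$ arbitrarily small and instead set the $c$-attractor separation to the \emph{same} order as the $v$-attractor separation, i.e.\ take $\delta = \Theta(1)$ (concretely, set the $c$-attractor packing radius to a constant fraction of $\gamma$, say replace $\delta\gamma/2$ by $\gamma/2$). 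Then $\Ag$ becomes a packing at scale $\gamma$ inside the union of $O(k)$ balls of radius $O(\gamma)$, so $|\Ag| = \BO{k}$ by a direct packing bound that does \emph{not} invoke the doubling dimension (each $v$-attractor ball of radius $2\gamma$ can contain only $\BO{1}$ points that are pairwise $\gamma$-separated, by the triangle inequality and the fact that $k+1$ points at pairwise distance $>2\gamma$ would invalidate the guess). This immediately collapses Facts~\ref{fact1} and~\ref{fact2} to $\BO{k}$ bounds.

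Next I would rerun the space accounting from the proof of Theorem~\ref{thm:space} verbatim, with $k'=\BO{k}$ in place of $2(k+1)(32/\delta)^{D_{W_t}}$. The bound on $|\Rg|$ was $\BO{k\cdot k'}$ there (both for representatives of live $c$-attractors and for the ``orphan'' set $\Og$), so it now becomes $\BO{k^2}$ per guess. Multiplying by the number of guesses $|\Gamma|=\BO{\log\Delta/\log(1+\beta)}=\BO{\log\Delta/\epsilon}$ (choosing $\beta=\Theta(\epsilon)$) yields total space $\BO{k^2\log\Delta/\epsilon}$, with no $D_{W_t}$ in the exponent. The update and query time bounds follow exactly as in Theorem~\ref{thm:time}, again substituting $k'=\BO{k}$: update is linear in the aggregate stored size, and query runs the screening loop plus one call to $\mathcal{A}$ on a coreset of size $\BO{k^2}$, giving $\BO{k^2\log\Delta/\epsilon+T_{\cal A}(k^2)}$.

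The main obstacle, and the part that needs genuine care rather than bookkeeping, is the approximation guarantee. The entire proof of Theorem~\ref{thm:approx} goes through unchanged up to the point where it bounds $d(x_{h+1},\pi_X(x_{h+1}))\le\delta\gamma$ and invokes Lemma~\ref{lem:coreset-conditions}; crucially, Lemmas~\ref{lem:invariants}, \ref{lem:coreset-conditions} and~\ref{lem:misrep} never used that $\delta$ is small, only that $\delta\in(0,1)$ controls the coreset radius. With $\delta$ now a constant, conditions (C1) and (C2) still hold with coreset radius $\Theta(\gamma)$, so by (P1)--(P2) algorithm $\mathcal{A}$ returns a solution of radius $\le\alpha\,\OPT_{W_t}+(1+2\alpha)\Theta(1)\cdot\hat{\gamma}$, and using $\hat{\gamma}\le(1+\beta)\OPT_{W_t}$ this becomes $(\alpha+\Theta(1))\OPT_{W_t}$. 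I would track the constants explicitly: with $\alpha=3$, the chosen constant separation, and $\beta=\Theta(\epsilon)$, the additive slack works out to $(31+\BO{\epsilon})\OPT_{W_t}$. The delicate step is verifying that the specific constant chosen for the $c$-attractor radius makes the arithmetic close exactly at $31$ (rather than some other constant) while simultaneously keeping $|\Ag|=\BO{k}$; this is where I would pin down the exact fraction of $\gamma$ used, since it trades off directly against the final approximation factor.
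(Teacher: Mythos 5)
Your space bound rests on a false claim: that ``each $v$-attractor ball of radius $2\gamma$ can contain only $\BO{1}$ points that are pairwise $\gamma$-separated, by the triangle inequality.'' In a general metric space the triangle inequality gives no packing bound at all: take $N$ points at pairwise distance exactly $\gamma$, all within distance $\gamma$ of a single $v$-attractor. This is a legitimate metric, the guess stays valid (there is only one $v$-attractor, since every arriving point is within $2\gamma$ of it), yet your modified algorithm would create $N$ distinct $c$-attractors, with $N$ unbounded. The validity certificate only limits the number of active points at pairwise distance $>2\gamma$, via the explicit cap on $|\AVg|$ enforced by \textsc{Cleanup}; it says nothing about packings at the finer scale $\gamma/2$. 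This is precisely why Fact~\ref{fact2} carries the factor $(32/\delta)^{D_{W_t}}$: for \emph{any} constant $\delta<4$ that exponential dependence remains, and no choice of constant trades it away. So the ``delicate step'' you flag at the end cannot be closed the way you propose.

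The one value that works is $\delta=4$, i.e., $c$-attractor separation exactly $2\gamma$: then the admission rule for $\Ag$ coincides with that for $\AVg$, and (tracing the \textsc{Cleanup} logic) the two sets remain identical at all times, so $|\Ag|\leq k+1$ holds \emph{structurally} --- by the explicit oldest-removal cap, not by any packing argument. This is in essence what the paper does, phrased differently: it deletes the $c$-structures altogether and instead stores, for each $v$-attractor $v$, a maximal independent set of recently attracted points in place of the single representative $\repVg(v)$, then runs $\mathcal{A}$ on $\RVg$; the space per guess is a factor $k$ larger than in Fact~\ref{fact1}, i.e.\ $\BO{k^2}$, giving $\BO{k^2\log\Delta/\epsilon}$ overall with $\beta=\epsilon$. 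Your approximation bookkeeping is fine once the forced constant is inserted: with $\delta=4$, $\alpha=3$ and $\beta=\epsilon$, the argument of Theorem~\ref{thm:approx} yields radius at most $\alpha\,\OPT_{W_t}+(1+2\alpha)\cdot 4\cdot(1+\epsilon)\,\OPT_{W_t}=(31+28\epsilon)\,\OPT_{W_t}$, which is where $31$ comes from --- it is dictated by $\delta=4$, not a parameter you can tune downward while keeping the dimension-free space bound.
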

\begin{proof}
 We  recast our algorithm by eliminating the two sets of coreset points and all operations related to their maintenance. Also, rather than a simple point, we make sure to maintain in  $\repVg(v)$ the most recent maximal independent set of points attracted by $v$. Procedures  {\sc Update} and {\sc Cleanup} are modified accordingly. 
 Finally, {\sc Query} computes the final clustering by selecting the right guess $\gamma$ as before and then applying algorithm $\cal A$ to $RV_\gamma$ rather than $R_\gamma$. 
 The bounds on time and space are immediately obtained by noticing that the size of the set  $\RVg$ is no more than a factor $k$ larger than its  size in the original algorithm. The bound on approximation can be obtained by setting $\beta=\epsilon$ and by a straightforward adaptation of 
the analysis of the original algorithm for the case $\delta=4$, a value for which the 
set of $v$-points on which $\cal A$ is invoked coincides with the old set $R_\gamma$.
\end{proof}

\section{Experiments} \label{sec-experiments}

We evaluate experimentally our coreset-based algorithm against state of the art baselines, aiming at answering the following questions:
\begin{itemize}
\item How does the coreset size impact the performance of our algorithm? (Subection~\ref{sec:delta-dependency})
\item How do the algorithms behave on different window lengths? (Subection~\ref{sec:window-size-dependency})
\item How is the performance of our algorithm affected by the actual dimensionality of the dataset? 
(Subection~\ref{sec:dimensionality-dependency})
\end{itemize}
Below, we provide details of the experimental setup, namely, a description of the datasets, the implemented algorithms and baselines, and the performance metrics used in our comparisons.

\paragraph{Datasets} 
We consider three real-world datasets.
\dataset{PHONES}~\footnote{UCI repository: \url{https://doi.org/10.24432/C5689X}} is a dataset of sensor data from smartphones:
each of the 13\,062\,475 points represents the position of a phone in three dimensions, and is labeled with 
one of $\ell=7$ categories corresponding to user actions: stand, sit, walk, bike, stairs up,
stairs down, null. The aspect ratio of the entire dataset is $\approx 6.4\cdot 10^5$;
\dataset{HIGGS}~\footnote{UCI repository: \url{https://doi.org/10.24432/C5V312}} contains 11 million 7-dimensional points representing simulated high-energy particles,  featuring an aspect ratio of $\approx 2.3\cdot 10^4$. 
Each point is labeled as either \emph{signal} or \emph{noise} ($\ell=2$);
\dataset{COVTYPE}~\footnote{UCI repository: \url{https://doi.org/10.24432/C50K5N}} reports cartographic variables, with each of the 581\,012 54-dimensional observations associated with
one out of $\ell=7$ possible forest cover types.
The aspect ratio is $\approx 3.1\cdot 10^3$. 
Besides these  real-world datasets, some experiments are run on suitably crafted synthetic data, which will be specified later.  

\paragraph{Algorithms}
As baseline algorithms we consider the matroid center algorithm by \cite{ChenLLW16} (\algo{ChenEtAl}),
and the fair center algorithm by \cite{DBLP:conf/icml/JonesNN20}
(\algo{Jones}).
Note that both are sequential algorithms
for the problem. In the sliding window setting, upon a query, we run the baseline algorithms on \emph{all} points of the current window. Thus, in principle, these provide the most accurate solutions, at the expense of high space/time complexity.   
Algorithm \algo{Ours} is the implementation of our algorithm with knowledge of the minimum and maximum pairwise distances, $\dmin$ and $\dmax$, between points of the stream, and invokes baseline $\cal{A} = \mbox{\algo{Jones}}$ in \algo{Query} only on the points of the selected coreset. Since the aspect ratio of a stream is often unknown, we also implemented \OursObl, where running estimates of $\dmin$ and $\dmax$ \emph{for the current window} are obtained by means of the techniques of~\citep{PellizzoniPP22}, based on a 
sliding-window diameter-estimation algorithm.
Upon update and query, \OursObl considers only the guesses that are within the
current $[\dmin,\dmax]$ interval.
For both \algo{Ours} and \algo{OursOblivious}, the progression of the guess values
is defined fixing $\beta=2$. In fact, we found that varying this parameter does not significantly influence the results.
The precision parameter $\delta$ will instead varies in the set $\delta \in \{0.5, 1, 1.5, 2, 2.5, 3, 3.5, 4\}$, with the understanding that lower values result in finer-grained (i.e. larger) coresets, while $\delta=4$ is equivalent to using a coreset comparable in size to the validation set (i.e., the one yielding the result of Corollary~\ref{cor:gen}). This set of values for $\delta$ indeed allows spanning a very wide range of coreset sizes.
For the window size, we consider values between 10\,000 and 500\,000 points.
The cardinality constraints on the $\ell$ colors are set so that $\sum_i^\ell k_i = 14$
and that $k_i$ is proportional to the number of points of color $i$ in the entire dataset, with the total number of clusters (14) chosen so that in all experiments, if the proportionality of the colors is balanced, there can be at least two centers per color.
We set a timeout of 24 hours on each execution.

The algorithms are implemented using Java 1.8, with the code publicly available\footnote{\url{https://github.com/FraVisox/FKC}}.
The experiments have been carried out on a machine with 500Gb of RAM, equipped
with a 32-core AMD EPYC 7282 processor.

\paragraph{Performance metrics}

We consider four performance indicators:
the number of points maintained in memory by the algorithms;
the running times of both the \algo{Update} and \algo{Query} procedures;
and the approximation \emph{ratio}, namely the ratio between the obtained radius and the best radius ever found by \Chen or \Jones when run on all points of the window. All the metrics are computed as averages over 200 consecutive sliding windows of the stream.

\subsection{Dependency on the coreset size}\label{sec:delta-dependency}

A first set of experiments focuses on the influence of the coreset size on the performance of our algorithms. We fix the window size to 10\,000 points, and obtain varying  coreset sizes by 
varying $\delta$ between 0.5 and 4. The top graphs of
Figure~\ref{fig:deltas-ratio} show, for all datasets and all algorithms, how the 
solution quality, assessed through the approximation ratio defined above, 
changes with $\delta$.
We observe that \Ours and \OursObl find solutions of comparable quality, as do \Chen and \Jones (observe that the performance of these two latter algorithms is clearly independent of $\delta$, since they are always executed on the entire window). When
the coreset is smallest (i.e., $\delta=4$), clearly our algorithms find worse solutions, but always within a factor 2 from those returned by the baselines; as the coreset gets
larger (i.e., for smaller $\delta$), then the solutions become of quality
comparable (and, surprisingly, sometimes better) to the one computed by the sequential baselines.
\begin{figure}
    \centering
    \includegraphics[width=\linewidth]{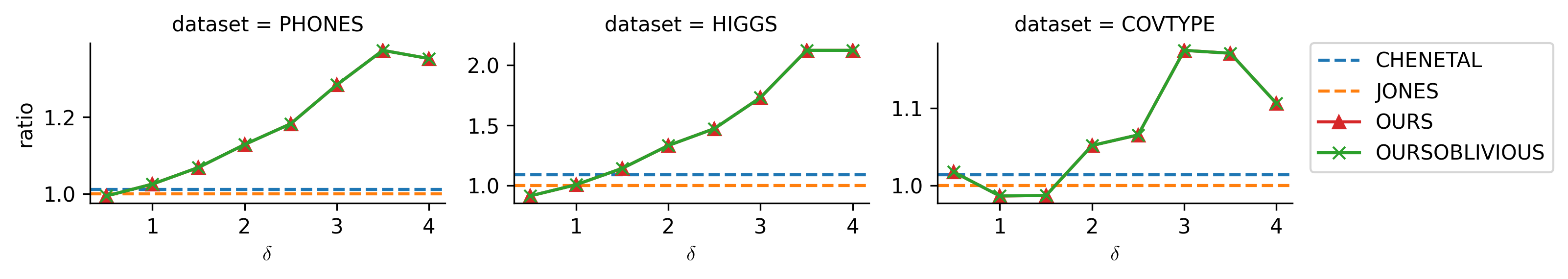}
    \includegraphics[width=\linewidth]{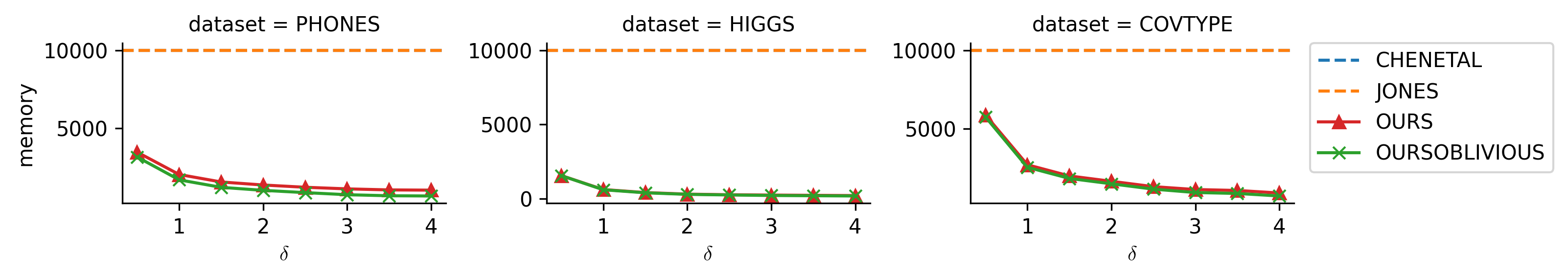}
    \caption{Approximation ratio (top) and memory (bottom, in number of points) for varying $\delta$.}
    \label{fig:deltas-ratio}
\end{figure}
The bottom graphs of Figure~\ref{fig:deltas-ratio} compare the memory usage of the
algorithms. As expected, for all values of $\delta$, our algorithm uses substantially less memory than the baselines, which maintain the entire window in memory, with gains becoming more marked as $\delta$ grows. This  behavior provides experimental evidence of the interesting memory-accuracy tradeoff featured by our algorithms, and is in accordance with the theoretical analysis. We also observe that \OursObl always requires (slightly) less memory than \Ours, which is due to the fact that maintaining updated estimates of $\dmin$ and $\dmax$ for the current window, rather than for the entire stream, reduces the space of feasible guesses, and thus the number of stored points. These space savings could become more significant for streams where the global aspect ratio is sensibly larger than the average aspect ratio within a window. 

\begin{figure}
    \centering
    \includegraphics[width=\linewidth]{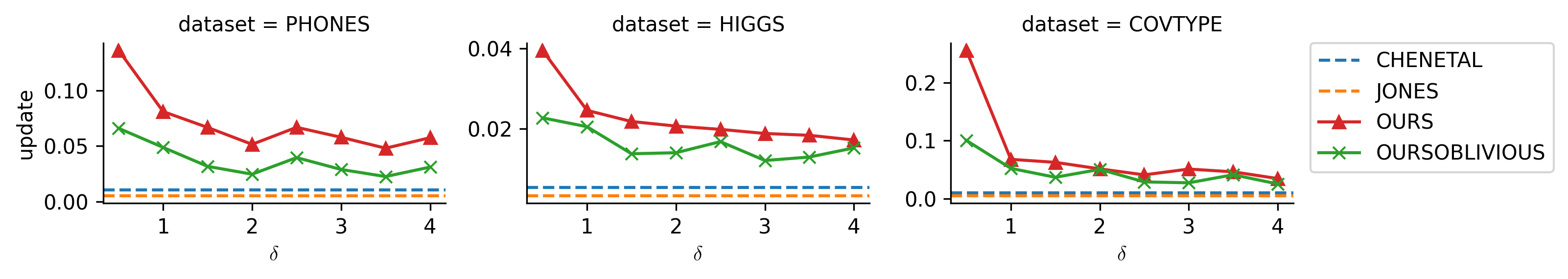}
    \includegraphics[width=\linewidth]{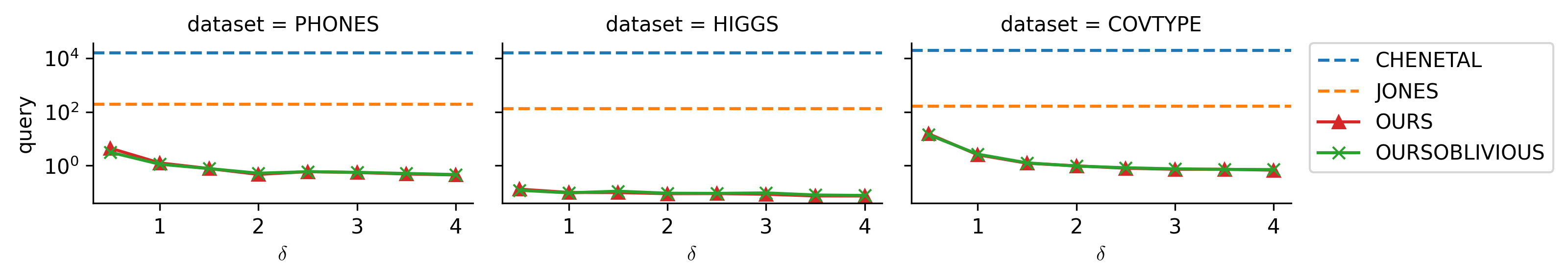}
    \caption{Running time in milliseconds of update (top) and query (bottom) for varying $\delta$. The scale of query time is logarithmic.}
    \label{fig:deltas-times}
\end{figure}

The top and bottom graphs of Figure~\ref{fig:deltas-times} report the runtime performance, in milliseconds,
for the update and query procedures, respectively.
Clearly, the sequential baselines feature next-to-zero update time, simply because they store the entire window, thus, at each time step, there is nothing to do other than adding one point and removing one to maintain the current window. Nevertheless, both \Ours{} and \OursObl feature reasonable update times, not exceeding 0.15 milliseconds, which is just a tiny fraction of the cost of a query.
It is interesting to observe that despite the 
overhead incurred by \OursObl to estimate  $\dmin$ and $\dmax$ for the current window, the advantage of having fewer guesses makes it always faster than \Ours{}. In both cases, using larger values of $\delta$ (i.e., smaller coresets) leads to faster update times, as expected. However, it is on query times where the difference with the baselines is most striking: by confining the execution of the expensive baselines only to a small coreset of window points, \Ours{} and \OursObl are able to find a solution for every window \emph{up to two orders of magnitude faster} than \Jones, which is in turn two orders of magnitude faster than \Chen. Even with the smallest values of $\delta$, 
which afford comparable accuracy  to the sequential baselines, the gains are between one and two orders of magnitude!

\subsection{Dependency on the window size}\label{sec:window-size-dependency}

The next set of experiments analyzes the behavior of the algorithms for varying window sizes.
For \Ours{} and \OursObl,   we consider the most accurate and slowest setting in the benchmark, 
thus fixing $\delta=0.5$.
(For the sake of space, we omit plots for the update time and the approximation ratio, which are fully consistent
with the findings of the previous section).
Figure~\ref{fig:wsize} (top) reports on the memory usage. Clearly, the memory used by the sequential baselines 
increases linearly with the window size. In contrast, 
the memory used by both versions of our algorithm, after an initial increase,  stabilizes to a value \emph{independent of the window size}.
This difference is reflected in the query time performance (Figure~\ref{fig:wsize}, bottom).
As observed in the previous section, the time gain when employing coresets compared to the use of the sequential baselines on the entire window is already of orders of magnitude for relatively small windows, and, as shown by the plots in the figure,
it keeps increasing very steeply with the window size.  In fact, for windows larger than 30\,000 points, the execution of the experiments relative to \Chen timed out, requiring about 400 seconds per window.
Similarly, those relative to \Jones timed out for windows larger than 200\,000 points.

\begin{figure}[t]
    \centering
    \includegraphics[width=\linewidth]{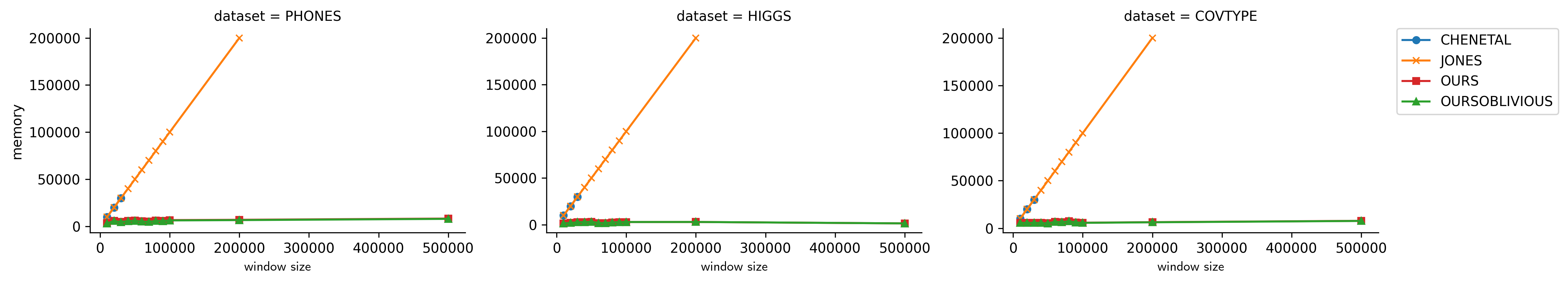}
    \includegraphics[width=\linewidth]{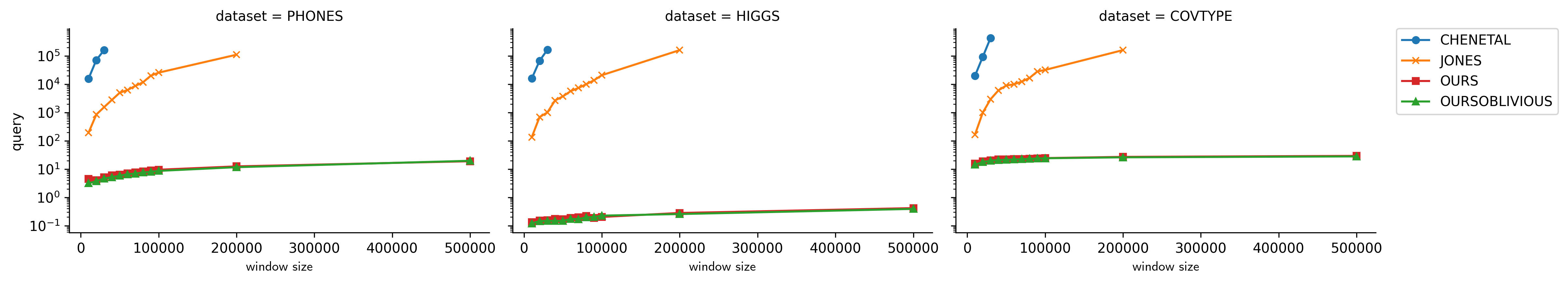}
    \caption{Memory (top, in number of points) and running time of query (bottom, in milliseconds) at varying window sizes. The scale of query time is logarithmic.}
    \label{fig:wsize}
\end{figure}

\subsection{Dependency on the dimensionality}\label{sec:dimensionality-dependency}

\begin{figure}[t]
\centering
  \includegraphics[width=.45\linewidth]{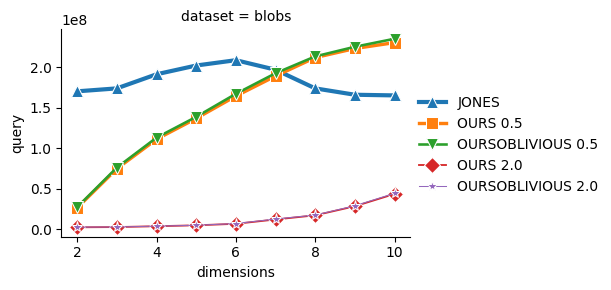}
  \includegraphics[width=.45\linewidth]{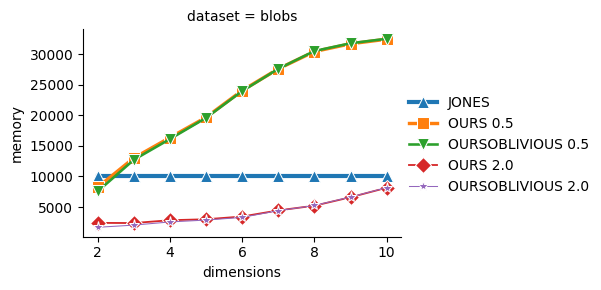}
    \caption{Query time (left, in milliseconds) and memory (right, in number of points) with respect to the dimensionality, on the \dataset{blobs} datasets.}
    \label{fig:dimensionality}
\end{figure}

\begin{figure}[t]
\centering
  \includegraphics[width=.45\linewidth]{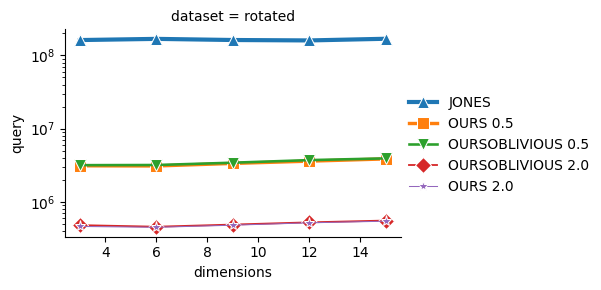}
  \includegraphics[width=.45\linewidth]{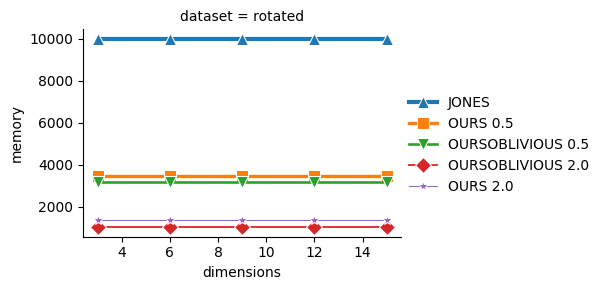}
    \caption{Query time and memory (resp. left and right) with respect to the dimensionality on the \dataset{rotated} datasets. The scale of query time is logarithmic.}
    \label{fig:rotation-experiments}
\end{figure}

This latter set of experiments studies the influence of the data dimensionality on the performance of our algorithm.
To this end we generated two families of synthetic datasets:
\dataset{blobs}, used to assess how the memory and query times of our algorithm are influenced by the dimensionality; and \dataset{rotated}, used to verify that these costs are related to the \emph{actual} dimensionality of the dataset, rather then the number of 
coordinates of each point.
The \dataset{blobs} datasets have 1\,000\,000 points: each dataset is a mixture of 21 multivariate $d$-dimensional Gaussians, for $2\le d\le 10$. The covariance matrix is $\Sigma=I_d\sigma^2$ with 
$\sigma=2$, and each point is assigned a random color out of 7, ensuring an even distribution. We set $k_i=3$, for $1\le i\le 7$, and the window size to $10\,000$.
The \dataset{rotated} datasets are derived from \dataset{PHONES} (which has 3 dimensions) by adding zero-filled dimensions to the original data, followed by a rigid random rotation of the entire extended dataset.
With this procedure we generate datasets with up to 15 nonzero coordinates, but by construction, all data still lie on a 3-dimensional subspace.

In these experiments, we use only \algo{JONES} as baseline and,
for our algorithm, we consider only two settings: 
$\delta=0.5$, which yields larger coresets but higher accuracy; 
and $\delta=2$, which yields smaller coresets and lower accuracy (that is still comparable, however, to the one attained by  the sequential baseline).
For conciseness, we omit the plots of the update time and the approximation ratio,
which are coherent with previous findings.

Figure~\ref{fig:dimensionality} reports the query time (left) and memory (right) on the \dataset{blobs} datasets. 
As expected, the performance of sequential baseline \algo{JONES} is insensitive to the dimensionality, while
both query time and memory usage of our algorithm grow with the dimensionality, 
 the growth being much steeper in the larger coreset setting ($\delta=0.5$), 
as suggested by the theoretical bounds.
We remark that our algorithm uses less memory than the sequential baseline for $\delta=2$, even for higher dimensions.

Figure~\ref{fig:rotation-experiments} reports results on the \dataset{rotated} datasets.
Note that in this case, where the data lies in a 3-dimensional subspace, the query time and memory for our algorithm are independent of the number of dimensions of the dataset. This confirms that our algorithm's performance depends on the \emph{actual} dimensionality of the dataset, rather than on the sheer number of coordinates of the vectors associated to the points.

\section{Conclusions} \label{sec-conclusions}
This paper presents the first sliding-window algorithm to enforce fairness in center selection, under the $k$-center objective. The algorithm works for general metrics, stores a number of points independent of the size of the window,  and provides approximation guarantees comparable to those of the best sequential algorithm applied to all the points of the window. Its space and time requirements are analyzed in terms of the dimensionality and of the aspect ratio of the input stream, although these values, which are difficult to estimate from the data, do not have to be known to the algorithm. A variant of the algorithm affording a  dimensionality-independent analysis, while still achieving $\BO{1}$ approximation, is also provided. Among the many avenues for future work, we wish to mention the extension of our algorithms to the robust variant of fair center, tolerating a fixed number of outliers, possibly fairly chosen among the most distant points w.r.t.\ a given solution. We believe that good approximations for robust fair center in sliding windows may be attained by  building on previous work for robust unconstrained $k$-center, matroid and fair center in the literature \citep{ChakrabartyN19,PellizzoniPP22a,DBLP:journals/jbd/CeccarelloPPS23,DBLP:conf/aistats/Amagata24}.

\backmatter

\section*{Declarations}

\subsubsection*{Funding}
This work was partially supported by MUR of Italy, under Projects PRIN 2022TS4Y3N - EXPAND, and PNRR CN00000013 (National Centre for HPC, Big Data and Quantum Computing).

\subsubsection*{Competing interests}
The authors have no relevant financial or non-financial interests to disclose.

\subsubsection*{Data availability}
The datasets are available at the UCI repository:
\dataset{PHONES} \url{https://doi.org/10.24432/C5689X},
\dataset{HIGGS} \url{https://doi.org/10.24432/C5V312}, and
\dataset{COVTYPE} \url{https://doi.org/10.24432/C50K5N}.

\subsubsection*{Code availability}
The code implementing the algorithms and the generated datasets of Section~\ref{sec-experiments} are publicly available at: 
\url{https://github.com/FraVisox/FKC}.

\subsubsection*{Author contribution}
All authors contributed to the study's conception, design, data collection, and analysis. All authors read and approved the final manuscript.

\bibliography{references}

\end{document}